\documentclass{article}

% if you need to pass options to natbib, use, e.g.:
    \PassOptionsToPackage{numbers, compress}{natbib}
% before loading neurips_2024

% NeurIPS submission
% \usepackage{neurips_2024}
\usepackage[preprint]{edited_neurips_2024}
% \usepackage[final]{neurips_2024}

% Packages
\usepackage[utf8]{inputenc}        % allow utf-8 input
\usepackage[T1]{fontenc}           % use 8-bit T1 fonts
\usepackage[hidelinks]{hyperref}   % hyperlinks
\usepackage{url}                   % url
\usepackage{graphicx}
\usepackage{booktabs, multicol}    % table making
\usepackage{enumitem}              % adjust margins for enum
\usepackage{wrapfig}               % wrap around table or figure
\usepackage[table, svgnames]{xcolor}  % color table
\usepackage{amssymb, amsmath, amsfonts, amsthm}
\usepackage[toc,page]{appendix}
\usepackage{titletoc}              % table of contents in appendix
\usepackage{algorithm}             % algorithm block
\usepackage{algpseudocode}         % algorithm block
\usepackage{tikz}                  % fancy symbols
\usepackage{soul}                  % better underline

% Separate references for technical appendices
% \newcites{L}{Supplementary References}

% Shaded content.
\newcommand{\shadedText}[1]{
\noindent\colorbox{teal!5}{
  \parbox{\dimexpr\linewidth-2\fboxsep}{
      #1
    }
  }
}

\definecolor{YaleBlue}{rgb}{0.059,0.302,0.573}
\definecolor{forestgreen}{rgb}{0.133,0.549,0.133}
\definecolor{crimson}{rgb}{0.863,0.078,0.235}

\begin{document}

%
% \title{Predicting Retinal Changes Over Time: Time-Aware image Translation (TAT) on Irregularly-Sampled Longitudinal Images}
\title{ImageFlowNet: Forecasting Multiscale Image-Level Trajectories of Disease Progression with Irregularly-Sampled Longitudinal Medical Images}

\author{%
\textbf{Chen Liu}$^{1 *}$ \quad
\textbf{Ke Xu}$^{1 *}$ \quad
\textbf{Liangbo L. Shen}$^{2}$ \quad
\textbf{Guillaume Huguet}$^{3,4}$ \quad
\textbf{Zilong Wang}$^{3,5}$\\
\textbf{Alexander Tong}$^{3,4 \S}$ \quad
\textbf{Danilo Bzdok}$^{3,5 \S}$ \quad
\textbf{Jay Stewart}$^{2 \S}$ \quad
\textbf{Jay C. Wang}$^{1,2,6 \S}$\\
\textbf{Lucian V. Del Priore}$^{1 \S}$ \quad
\textbf{Smita Krishnaswamy}$^{1 \S}$ \vspace{6pt}\\
{\small $^1$Yale University \quad $^2$University of California, San Francisco \quad $^{3}$Mila - Quebec AI Institute}\\
{\small $^{4}$Universit\'e de Montr\'eal \quad $^{5}$McGill University \quad $^{6}$Northern California Retina Vitreous Associates \vspace{6pt}}\\
{\small *~These authors are joint first authors: \texttt{\{chen.liu.cl2482, k.xu\}@yale.edu}. \quad $\S$~Senior authors.}\\
{\small Please direct correspondence to: \url{smita.krishnaswamy@yale.edu} or \url{lucian.delpriore@yale.edu}.}
}

% This block is used to define \inserttitle
\makeatletter
\let\inserttitle\@title
\makeatother

% Define colors
\definecolor{wine}{HTML}{830E0D}
\definecolor{color_ode}{RGB}{186, 108, 73}
\definecolor{color_unet}{HTML}{1365C0}
\definecolor{nicered}{HTML}{B22222}
\definecolor{niceblue}{HTML}{0000FF}
% Color table in cell.
\newcommand{\cc}[0]{\cellcolor{color_unet!10}}

% Define a custom command for fancy numbers
\newcommand{\fancynumber}[1]{%
  \tikz[baseline=(char.base)]{
    \node[shape=circle,draw=black,fill=wine,inner sep=1pt](char){\color{white}#1};
  }%
}

% Math related stuff.
\newtheorem{theorem}{Theorem}[section]
\newtheorem{lemma}[theorem]{Lemma}
\newtheorem{prop}[theorem]{Proposition}
\newtheorem{cor}{Corollary}
\newtheorem{definition}{Definition}[section]
\newtheorem{conj}{Conjecture}[section]
\newtheorem{rem}{Remark}

\maketitle
\begin{abstract}
Advances in medical imaging technologies have enabled the collection of longitudinal images, which involve repeated scanning of the same patients over time, to monitor disease progression. However, predictive modeling of such data remains challenging due to high dimensionality, irregular sampling, and data sparsity. To address these issues, we propose ImageFlowNet, a novel model designed to forecast disease trajectories from initial images while preserving spatial details. ImageFlowNet first learns multiscale joint representation spaces across patients and time points, then optimizes deterministic or stochastic flow fields within these spaces using a position-parameterized neural ODE/SDE framework. The model leverages a UNet architecture to create robust multiscale representations and mitigates data scarcity by combining knowledge from all patients. We provide theoretical insights that support our formulation of ODEs, and motivate our regularizations involving high-level visual features, latent space organization, and trajectory smoothness. We validate ImageFlowNet on three longitudinal medical image datasets depicting progression in geographic atrophy, multiple sclerosis, and glioblastoma, demonstrating its ability to effectively forecast disease progression and outperform existing methods. Our contributions include the development of ImageFlowNet, its theoretical underpinnings, and empirical validation on real-world datasets. The official implementation is available at \url{https://github.com/KrishnaswamyLab/ImageFlowNet}.

\end{abstract}

\section{Introduction}

Advances in medical imaging technologies such as X-ray, computed tomography~(CT), optical coherence tomography~(OCT), and magnetic resonance imaging~(MRI) combined with improved storage capacity and practices have enabled the collection of longitudinal medical images that track disease progression~\cite{ImagingInHealthcare, longitudinal_image1, longitudinal_image2}. However, predictive modeling using such data is challenging due to the high dimensionality of images, irregular time intervals between samples, and the sparsity of data in most patients (see Appendix~\ref{supp:additional_background} for more background). These challenges often lead to methods that undermine the spatial-temporal nature of the data and instead treat them as time series of hand-crafted features~\cite{EBM1, EBM2, DEBM, lu2024cats, ruan2024comprehensive, lyu2023multimodal, guo2024weits, sun2023manifold, dong2023integrated, DCD, Disease_prog_xgboost, Disease_prog_polynomial_network, Disease_prog_lstm_covid19, Disease_prog_transformer1, Disease_prog_lstm_AD}, losing rich spatial information within the images (see Fig.~\ref{fig:advantage}, top panel).
\begin{wrapfigure}{l}{0.65\textwidth}
    \centering
    \includegraphics[width=0.65\textwidth]{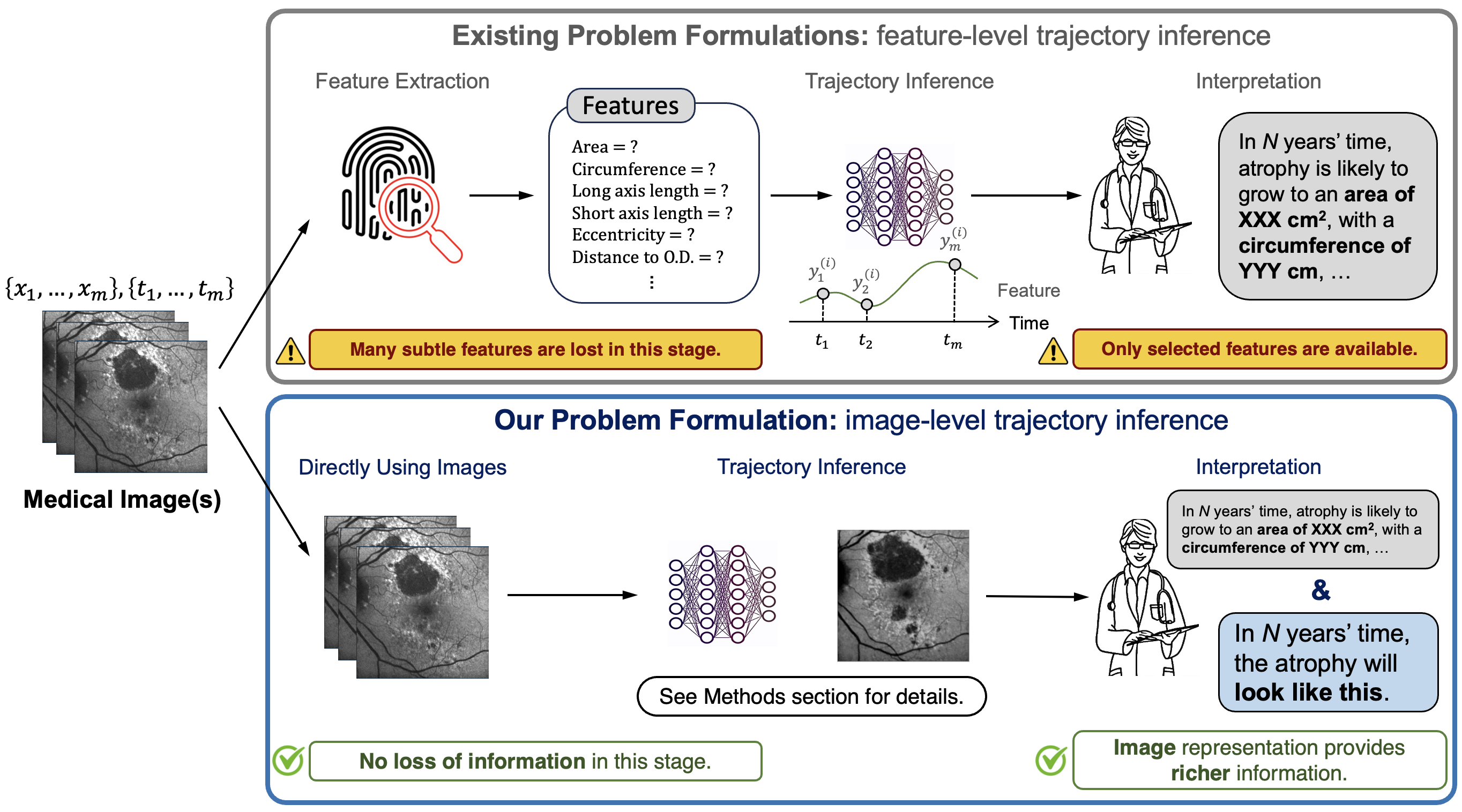}
    \caption{Advantages of image-level trajectory inferece.}
    \label{fig:advantage}
    \vskip -4pt
\end{wrapfigure}
To address all these issues, we introduce ImageFlowNet, a model designed to forecast disease progression from initial images while addressing the aforementioned challenges and preserving spatial detail~(see Fig.~\ref{fig:advantage}, bottom panel). ImageFlowNet learns multiscale joint representation spaces from all patients' images and optimizes deterministic or stochastic flow fields over these spaces using a modified position-parameterized neural ODE/SDE framework.

ImageFlowNet learns multiscale representations with a UNet~\cite{UNet} backbone (implementation adopted from~\cite{DDPM}) while enforcing robustness to variations such as scaling, rotation, and contrast through extensive augmentation. Next, a vector field representing flows in each representation space is learned with a position-parameterized neural ODE framework. Unlike the standard parameterization by time $f_\theta(z_t, t)$, we parameterize the derivative by each vector's position in the embedding space $f_\theta(z_t)$, which ensures that the space is shared across all patients at all times. Due to the nature of such joint embedding spaces, this approach mitigates patient-level data scarcity. We also present a stocastic alternative that generates non-deterministic trajectories.

In addition to presenting these networks, we theoretically establish the equivalent expressive power of the ODE and demonstrate the connections of ImageFlowNet to dynamic optimal transport. We then present empirical results on one longitudinal retinal image dataset with geographic atrophy and two longitudinal brain image datasets with multiple sclerosis and glioblastoma. Our main contributions are as follows.

\vskip -8pt
\begin{enumerate}[leftmargin=22pt]
    \item Proposing ImageFlowNet to forecast trajectories of disease progression in the image domain.
    \item Learning multiscale joint patient representation spaces that integrate the knowledge from all observed trajectories and remedy the data scarcity issue of any single patient.
    \item Designing a multiscale position-parameterized ODE/SDE and providing theoretical rationales.
    \item Showcasing results on three medical image datasets with sparse longitudinal progression data. 
\end{enumerate}
\vskip -12pt

\section{Preliminaries and Background}
\label{sec:prelim}

\paragraph{Problem Formulation and Notation}
We consider a set of $N$ longitudinal image series $\{X^{(m)}\}_{m=1}^N$, where the $m$-th series $X^{(m)}$ contains $n_m \geq 2$ images and all images have the same dimension $\mathbb{R}^{H \times W \times C}$. These images $X^{(m)} = \{ x^{(m)}_1, x^{(m)}_2, ..., x^{(m)}_{n_m} \}$ are acquired at time points $T^{(m)} = \{ t^{(m)}_1, t^{(m)}_2, ..., t^{(m)}_{n_m} \}$ where $t^{(m)}_1 < t^{(m)}_2 < ... < t^{(m)}_{n_m} \in \mathbb{R}$. Note that we do not assume any specific sampling schedule, such as uniform sampling over time. In addition, the time points for different series are not necessarily the same. This represents a very common scenario of a medical record containing $N$ patients with multiple visits, where visit schedules can be irregular over time and heterogeneous among patients. For simplicity, we will omit the script $m$ when considering the same series. Our task is to predict any image $x_k$ with $1 < k \leq n$ given any subset of the earlier images $\tilde{X}_{<k}$ and their corresponding time points $\tilde{T}_{<k}$, where $\varnothing \subset \tilde{X}_{<k} \subseteq \{ x_1, x_2, ..., x_{k-1} \}$.

% In the following, we assume that all Stochastic Differential Equations~(SDE) admit a solution.

\paragraph{Neural Ordinary Differential Equations (Neural ODEs)}
Neural ODEs~\cite{NeuralODE} model the evolution of a variable over time by considering the ODE in Eqn~\eqref{eqn:neural_ode_diffeq}, where $f_\theta$ is parameterized by a neural network. Since the gradient field $f_\theta$ is defined at every time point, future states can be modeled deterministically from an earlier state by integration, as shown in Eqn~\eqref{eqn:neural_ode_infer}.

% \nolinenumbers % otherwise weirdly it numbers subequations.
% \vskip -6pt
\begin{subequations}
\noindent\begin{minipage}[t]{0.4\textwidth}
\begin{equation}
\frac{\mathrm{d} y(\tau)}{\mathrm{d}\tau} = f_\theta(y(\tau), \tau)
\label{eqn:neural_ode_diffeq}
\end{equation} 
\end{minipage}
\begin{minipage}[t]{0.58\textwidth}
\begin{equation}
y(t_1) = y(t_0) + \int_{t_0}^{t_1} f_\theta(y(\tau), \tau) \mathrm{d}\tau
\label{eqn:neural_ode_infer}
\end{equation}
\end{minipage}
\end{subequations}
% \vskip -6pt
% \linenumbers

In practice, the integration step is performed by an ODE solver: $y(t_0) + \int_{t_0}^{t_1} f_\theta(y(\tau), \tau) \mathrm{d}\tau = \mathrm{ODESolve}(y(t_0), f_\theta, t_0, t_1)$. The gradient field $f_\theta$ can be optimized by any loss function $L(\cdot)$ that takes the result from the solver as input, in the form of $L(\mathrm{ODESolve}(y(t_0), f_\theta, t_0, t_1), \mathrm{*args})$.

\paragraph{Neural Stochastic Differential Equations~(Neural SDEs)}

% Neural stochastic differential equations~(neural SDEs)~\cite{NeuralSDE} inject stochasticity into deterministic neural ODEs by additionally considering Brownian motion $\{ W_t \}_{t \geq 0}$, as shown in Eqn~\eqref{eqn:neural_sde_diffeq}, \eqref{eqn:neural_sde_infer}. $f_\theta$ and $\sigma_\phi$ respectively model the drift and diffusion components.

% \vskip -8pt
% \begin{subequations}
% \begin{align}
% \label{eqn:neural_sde_diffeq}
% \mathrm{d} y(\tau) &= f_\theta(y(\tau), \tau) \mathrm{d}\tau + \sigma_\phi(y(\tau), \tau) \mathrm{d}W_\tau\
% \label{eqn:neural_sde_infer}
% y(t_1) &= y(t_0) + \int_{t_0}^{t_1} \left( f_\theta(y(\tau), \tau) \mathrm{d}\tau + \sigma_\phi(y(\tau), \tau) \mathrm{d}W_\tau \right)
% \end{align}
% \end{subequations}
% \vskip -16pt

Neural SDEs~\cite{NeuralSDE} inject stochasticity into deterministic neural ODEs by additionally considering Brownian motion $\{ W_t \}_{t \geq 0}$ in the equation (Eqn~\eqref{eqn:neural_sde_diffeq}). $f_\theta$ and $\sigma_\phi$ respectively model the drift and diffusion components.

\vskip -12pt
\begin{align}
\label{eqn:neural_sde_diffeq}
\mathrm{d} y(\tau) &= f_\theta(y(\tau), \tau) \mathrm{d}\tau + \sigma_\phi(y(\tau), \tau) \mathrm{d}W_\tau
\end{align}

\paragraph{UNet}
UNet~\cite{UNet} is a convolutional neural network architecture originally designed for biomedical image segmentation, but has later been found competent in many tasks such as image-to-image translation~\cite{pix2pix}, style transfer~\cite{style_transfer_UNet}, and image generation as in diffusion models~\cite{DDPM}. It has a distinctive U-shaped structure with a contraction path that extracts multiscale features and an expansion path that recovers spatial resolution, along with skip connections that achieve residual learning~\cite{ResNet}.

\section{ImageFlowNet}

\begin{figure*}[!bth]
    \centering
    \includegraphics[width=0.9\textwidth]{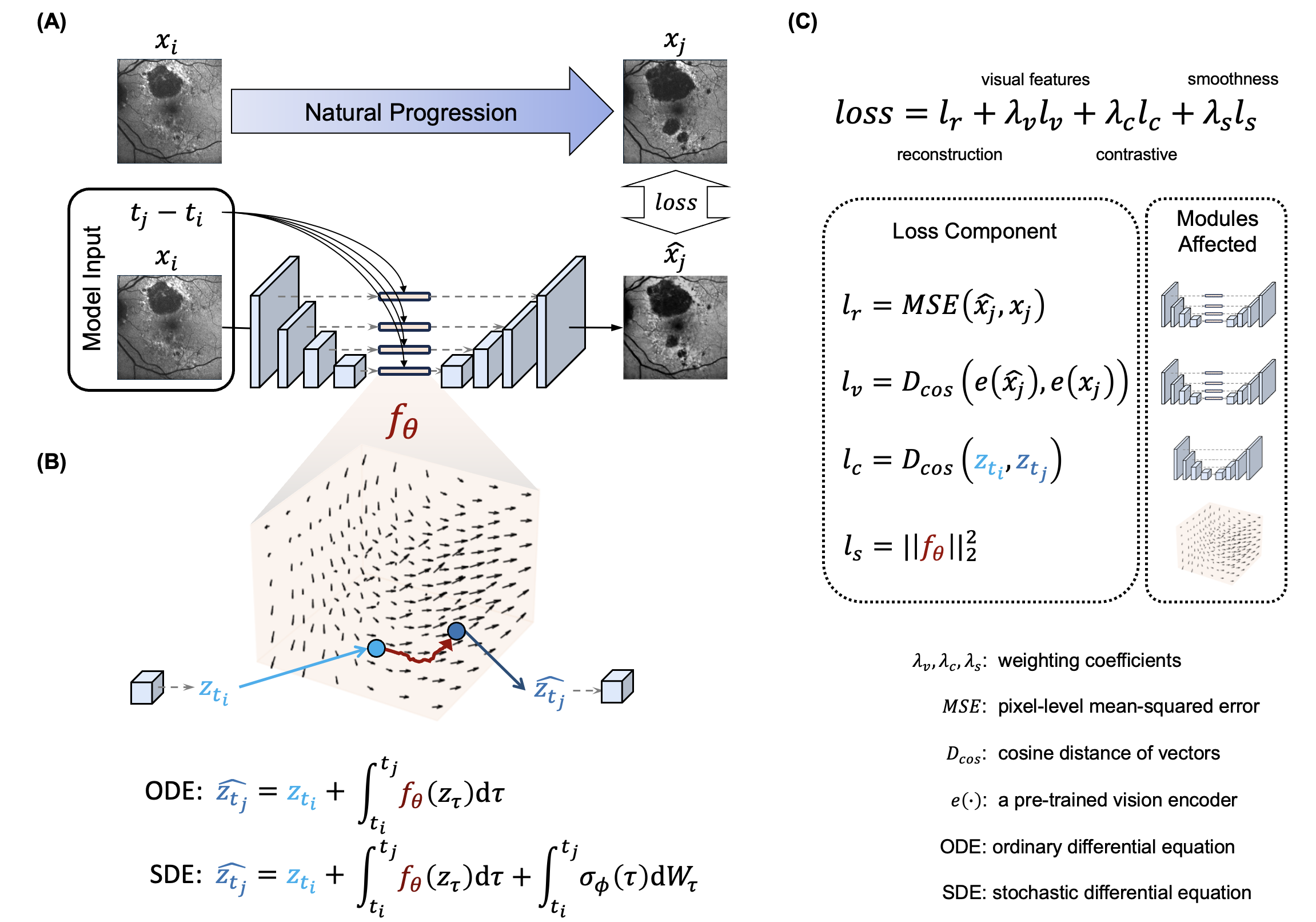}
    \caption{Overview of the proposed ImageFlowNet. (A) The model uses an earlier image $x_i$ at time $t_i$ as well as the change in time $t_j - t_i$ to forecast the future image $x_j$ at time $t_j$. (B) For each hidden layer, a separate flow field $f_\theta$ is used to model the joint patient embedding space. Trajectory inference can be performed by integration along this flow field. It should be noted that the change in time $t_j - t_i$ is sufficient for integration in practice, while the exact time values $t_i$ and $t_j$ are included in the integral merely for mathematical clarity. (C) The learning objective has four components. The loss function and modules affected by each component are illustrated. }
    \label{fig:architecture}
\end{figure*}

ImageFlowNet models the spatial-temporal dynamics of longitudinal images by first establishing a joint patient representation space~(\textcolor{YaleBlue}{Section~\ref{sec:learn_joint_patient_rep}}), and then flowing the representations of earlier time points to later time points~(\textcolor{YaleBlue}{Section~\ref{sec:learn_flow_field}}). This approach incorporates all images at all times during training, and thus addresses the issue of data scarcity at the individual patient level without impairing the inference capabilities. From an engineering perspective, ImageFlowNet extracts latent representations at various resolutions and reassembles them to an image (i.e., the \textit{spatial} aspect) after evolving these latent representations over time (i.e., the \textit{temporal} aspect) along a learned flow field.

\subsection{Learning Multiscale Spaces of Joint Patient Representations}
\label{sec:learn_joint_patient_rep}

As illustrated in Fig.~\ref{fig:architecture}~(A), we first learn joint embedding spaces for training samples in the hidden layers within the contraction path of a UNet. Multiscale representations are extracted from an input image $x_i$ acquired at time $t_i$ to produce $B$ representations, one for each hidden layer, at $R$ resolutions, which we denote $\{ z_{t_i}^{(b)} \}_{b=1}^B$, with $B \geq R$.

During training, images are augmented with transformations that may naturally occur during acquisition, including reflection, rotation, shifting, rescaling, random brightness and contrast, and additive noise. Augmented versions effectively enlarge the sample variety and better populate the joint embedding spaces. This increases the chance that, during inference, a new image is embedded close to images seen in the training set and can leverage the learned dynamics around that local cohort.

\subsection{Learning Multiscale Flow Fields on Joint Patient Representations}
\label{sec:learn_flow_field}

As depicted in Fig.~\ref{fig:architecture}~(B), at each hidden layer that spans different granularities of the image, a flow field is learned to evolve the joint patient representations at that scale. The flow field $f_\theta^{(b)}$ parameterizes the flow gradient as described in Eqn~\eqref{eqn:imageflownet_diffeq}, so that, given an initial position and a time duration, the trajectory can be computed through integration. $f_\theta^{(b)}$ is implemented as a 2-layer convolutional neural network whose input and output dimensions match the dimension of $z_{t_i}^{(b)}$. The latent representation $z_{t_j}^{(b)}$ corresponding to the future time $t_j$ can be inferred using Eqn~\eqref{eqn:imageflownet_infer}, using Eqn~\eqref{eqn:imageflownet_expansion_path} or similar variants. Finally, these multiscale representations meet at the expansion path to compose an output image. From now on, we will omit the superscript $\cdot^{(b)}$ if the specific hidden layer is not emphasized.

% \nolinenumbers % otherwise weirdly it numbers subequations.
\vskip -16pt
\begin{subequations}
\noindent\begin{minipage}[t]{0.43\textwidth}
\begin{equation}
\frac{\mathrm{d} z_{\tau}^{(b)}}{\mathrm{d}\tau} = f_\theta^{(b)}(z_{\tau}^{(b)})  \quad \text{ for } b \in [1, B]
\label{eqn:imageflownet_diffeq}
\end{equation} 
\end{minipage}
\begin{minipage}[t]{0.57\textwidth}
\begin{equation}
z_{t_j}^{(b)} = z_{t_i}^{(b)} + \int_{t_i}^{t_j} f_\theta^{(b)} (z_\tau^{(b)}) \mathrm{d}\tau \quad \text{ for } b \in [1, B]
\label{eqn:imageflownet_infer}
\end{equation}
\end{minipage}
\end{subequations}

\vskip -10pt
\begin{align}
\begin{split}
    \widehat{x_j} &= \text{ResBlock}(\text{Concat}(\tilde{z}_{t_j}^{(2)}, z_{t_j}^{(1)})) \text{, where}\\
    \tilde{z}_{t_j}^{(b)} &= \text{Upsample}(\text{ResBlock}(\text{Concat}(\tilde{z}_{t_j}^{(b+1)}, z_{t_j}^{(b)}))) \text{ for } b \in [2, B-1] \text{, with } \tilde{z}_{t_j}^{(B)} = z_{t_j}^{(B)}
\end{split}
\label{eqn:imageflownet_expansion_path}
\end{align}
\vskip -20pt

\subsection{Training and Inference}
\paragraph{Training Objectives}
As in other neural differential frameworks, we use an ODE solver to compute the integral. Any loss function on the inferred latent representation $z_{t_j}$ can be backpropagated through the ODE solver. Since the inferred image $\widehat{x_{j}}$ only depends on the expansion path of ImageFlowNet and the inferred latent representations $\{ z_{t_i}^{(b)} \}_{b=1}^B$, the same principle applies to the loss functions on $\widehat{x_{j}}$ as well.

As shown in Eqn~\eqref{eqn:loss_total}, our loss function contains four components, which we will explain below. During training, the learnable parameters in the entire ImageFlowNet are penalized by the first
two components, while contrastive regularization only affects the UNet backbone and smoothness
regularization only affects the flow field, as described in Fig.~\ref{fig:architecture}~(C).

\vskip -10pt
% The followng chunk slightly reduces equation font size.
\begingroup\makeatletter\def\f@size{9}\check@mathfonts
\def\maketag@@@#1{\hbox{\m@th\large\normalfont#1}}%
\begin{align}
\begin{split}
    loss =& \quad \overset{\text{\fancynumber{1} $l_r$: reconstruction}}{\overbrace{
    \frac{1}{H W C} \sum_{h \in H} \sum_{w \in W} \sum_{c \in C} || \widehat{x_j}[h, w, c] - x_j[h, w, c] ||_2^2
    }} +
    \overset{\text{\fancynumber{2} $l_v$: visual feature}}{\overbrace{
    \lambda_v \left(- \frac{e(\widehat{x_j})^\top e(x_j)}{||e(\widehat{x_j})||_2 ||e(x_j)||_2} \right)
    }} \\
    +& \hspace{3pt} \overset{\text{\fancynumber{3} \color{color_unet} $l_c$: contrastive learning (SimSiam)}}{\overbrace{
    \lambda_c \left(-\frac{p_d(p_j(z_{t_i}))^\top p_j(z_{t_j}) }{2||p_d(p_j(z_{t_i}))||_2 ||p_j(z_{t_j})||_2} -\frac{p_d(p_j(z_{t_j}))^\top p_j(z_{t_i}) }{2||p_d(p_j(z_{t_j}))||_2 ||p_j(z_{t_i})||_2} \right)
    }} + 
    \overset{\text{\fancynumber{4} \color{color_ode} $l_s$: trajectory smoothness}}{\overbrace{
    \lambda_s || f_\theta ||_2^2
    }}
    % l(\cdot) = l_r + \lambda_v l_v + \lambda_c l_c + \lambda_s l_s
\end{split}
\label{eqn:loss_total}
\end{align}
\endgroup
% \vskip -4pt

% \begin{algorithm}
% \caption{ImageFlowNet}\label{alg:ode_unet}
% \textbf{Input:} Dataset $\{X^{(m)}\}_{m=1}^N$: $X^{(m)} = \{ x^{(m)}_1, x^{(m)}_2, ..., x^{(m)}_{n_m} \}$, $T^{(m)} = \{ t^{(m)}_1, t^{(m)}_2, ..., t^{(m)}_{n_m} \}$, batch size $S$, UNet contraction path $\psi(\cdot)$ and expansion path $\phi(\cdot)$, neural ODE module $f_\theta(\cdot)$.\\
% \textbf{Output:} Trained models $\psi(\cdot)$, $\phi(\cdot)$ and $f_\theta(\cdot)$.\\
% \textbf{Simplification:} Only showing $f_\theta(\cdot)$ at one instead of all latent representations.\\
% \begin{algorithmic}
% \While{not converged}
% \State $y \gets 1$
% \State $X \gets x$
% \EndWhile
% \end{algorithmic}
% \end{algorithm}

\noindent\fancynumber{1}~\textit{Image reconstruction} is achieved by MSE, attending to low-level features on the pixel level.

\noindent\fancynumber{2}~\textit{Visual feature regularization} guides the network to produce images that resemble the ground truth on high-level features judged by a ConvNeXT~\cite{ConvNeXT} encoder pretrained on ImageNet~\cite{ImageNet}.

\noindent\fancynumber{3}~\textit{Contrastive learning regularization} organizes a well-structured ImageFlowNet latent space, by encouraging proximity of representations from images within the same longitudinal series, following the SimSiam formulation~\cite{SimSiam}.

\noindent\fancynumber{4}~\textit{Trajectory smoothness regularization} leverages a theorem in convex optimization~(Lemma 2.2 in~\cite{Book_OptDataAnalysis}) to enforce smoothness of trajectories by regularizing the norm of the field. \textit{Notably, this achieves Lipschitz continuity, satisfying a crucial assumption for our theoretical results.}

\noindent\textbf{Full Trajectories Are Used for Training}
Although training is performed over pairs of observations, since all samples within the longitudinal series are embedded in the same flow field $f_\theta$ and trajectories between all pairs are used for optimization, the model is indeed trained by the full trajectories.

\noindent\textbf{Inferring Trajectories}
During inference, for a new patient with one or more previous observations, we can use any observation to serve as the starting point and obtain the future prediction using Eqn~\eqref{eqn:imageflownet_infer}. This approach has a low demand on access to patient history. In case we really want to take advantage of the entire patient history, we can perform test-time optimization to fine-tune $f_\theta$ using the measured time series and infer the patient trajectory afterwards (see Section~\ref{sec:test_time_optimization}).

\paragraph{ImageFlowNet$_\textrm{SDE}$ as a Stochastic Variant}

We formulate a Langevin-type SDE as an alternative to our ODE, given by $\mathrm{d} z_{\tau}^{(b)} = f_\theta^{(b)}(z_{\tau}^{(b)}) \mathrm{d}\tau + \sigma_\phi^{(b)}(\tau) \mathrm{d}W_\tau \hspace{2pt} \text{ for } b \in [1, B]$. This is motivated by stochasticity in patient trajectories and the need to model alternative outcomes from the same starting point, which is not possible in a deterministic ODE. In this construction, we decompose the dynamics into a deterministic drift function $f_\theta$ and a stochastic diffusion term $\sigma_\phi$. This SDE is guaranteed to have a unique strong solution under mild assumptions~\cite{StableNeuralSDE}.

% Proposition~\ref{prop:neural_sde_strong_solution} shows that it is guaranteed to have a unique strong solution under mild assumptions.
% \vskip -18pt
% \begin{align}
% \label{eqn:imageflownet_sde_diffeq}
% \mathrm{d} z_{\tau}^{(b)} &= f_\theta^{(b)}(z_{\tau}^{(b)}) \mathrm{d}\tau + \sigma_\phi^{(b)}(\tau) \mathrm{d}W_\tau \quad \text{ for } b \in [1, B]
% \end{align}
% \vskip -2pt

An important characteristic unique to SDEs is their ability to model alternative trajectories due to the stochastic diffusion term. This feature enables SDEs to estimate uncertainty by generating multiple trajectories to infer the distribution of potential outcomes. By analyzing the variability and frequency of these outcomes, we can quantify the likelihood of various progression trends.

\section{Theoretical Results}

\paragraph{A Position-Parameterized ODE}
Unlike the original version, our ODE is parameterized by time \textit{indirectly}: $f_\theta(z_t)$ instead of $f_\theta(z_t, t)$. This altered formulation learns a more compact vector field which alleviates the data scarcity issue, and allows the space to model disease states regardless of the time from disease onset. Importantly, we demonstrate that this formulation is \ul{theoretically equivalent in expressiveness} to the original formulation (Proposition~\ref{prop:expressiveness_equivalence}) and achieves \ul{better empirical performance} (Table~\ref{tab:gradient_field} in the Empirical Results section). The proof is shown in Appendix~\ref{supp:proofs}.

\shadedText{
\begin{prop}
\label{prop:expressiveness_equivalence}
Let $f_{\theta}$ be a continuous function that satisfies the Lipschitz continuity and linear growth conditions. Also, let the initial state $y(t_0) = y_0$ satisfy the finite second moment requirement $(\mathbb{E}[\|y(t_0)\|^2] < \infty)$. Suppose $z(t_0)$ is the latent representation learned by ImageFlowNet in the initial state corresponding to $t_0$. Then, our neural ODEs $($Eqn~\eqref{eqn:imageflownet_diffeq}$)$ are at least as expressive as the original neural ODEs $($Eqn~\eqref{eqn:neural_ode_diffeq}$)$, and their solutions capture the same dynamics.
\end{prop}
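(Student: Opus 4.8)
The plan is to establish the non-trivial inclusion — that the position-parameterized ODE in Eqn~\eqref{eqn:imageflownet_diffeq} is at least as expressive as the time-parameterized ODE in Eqn~\eqref{eqn:neural_ode_diffeq} — via the classical \emph{autonomization} (state-augmentation) of a non-autonomous ODE, and then to read off the ``same dynamics'' conclusion from the fact that the two solution curves agree on the original coordinates. The reverse inclusion is immediate: any $f_\theta(z_\tau)$ is the special case of $f_\theta(z_\tau,\tau)$ that discards its second argument, so the two formulations are in fact equivalent, with the proposition asserting the harder half.

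First I would perform the augmentation. Given a time-parameterized field $f_\theta(y,\tau)$ satisfying the stated Lipschitz and linear-growth hypotheses, append a ``clock'' coordinate $s$ and set $\hat z := (y, s)$, $\hat f_\theta(\hat z) := (f_\theta(y,s),\, 1)$. This $\hat f_\theta$ is autonomous, hence of the position-parameterized form, and it inherits continuity, a global Lipschitz constant, and linear growth from $f_\theta$ since adjoining the constant component $1$ affects none of these. By Picard--Lindel\"of the augmented ODE $\tfrac{d\hat z_\tau}{d\tau} = \hat f_\theta(\hat z_\tau)$ therefore has a unique global solution from any initial value, and $\hat z(t_0) = (y_0, t_0)$ still has a finite second moment because $t_0$ is a fixed scalar.

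Next I would check that the solutions coincide. Initializing $s(t_0) = t_0$, the clock component solves $\dot s = 1$, so $s(\tau) = \tau$ for all $\tau$; substituting this into the remaining equations gives precisely $\dot y(\tau) = f_\theta(y(\tau), \tau)$ with $y(t_0) = y_0$. By uniqueness, the projection of the position-parameterized solution onto the original coordinates equals the solution of Eqn~\eqref{eqn:neural_ode_diffeq}. Hence every trajectory realizable by a time-parameterized neural ODE is realized, coordinate-for-coordinate, by a position-parameterized one living in a latent space of one higher dimension — which is exactly what ``at least as expressive'' and ``captures the same dynamics'' mean here.

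Finally I would tie this back to ImageFlowNet. Because each latent tensor $z_t^{(b)}$ is high-dimensional, the single extra clock coordinate is absorbed at no cost: one takes $z(t_0)$ to be the learned initial embedding augmented with a time channel, and the 2-layer convolutional $f_\theta^{(b)}$ has the capacity to represent the drift-plus-constant field $\hat f_\theta$ (appealing to universal approximation for the neural parameterization). I expect the main obstacle to be making this last identification precise rather than heuristic — that is, arguing carefully that passing to the position-parameterized class does not shrink the set of realizable dynamics once the ambient latent dimension may grow by one, and that the smoothness regularization $l_s$ keeps the learned field inside the Lipschitz/linear-growth hypothesis class used above; the remainder is the textbook autonomization of a non-autonomous ODE.
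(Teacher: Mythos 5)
Your proof is correct and follows essentially the same route as the paper's: the paper's bijection $h(y,\tau) = y \oplus \tau$ with $\tilde{f}_\theta(z) = \frac{\partial h}{\partial y} f_\theta + \frac{\partial h}{\partial \tau}$ is precisely your clock augmentation $\hat{z}=(y,s)$, $\hat{f}_\theta(\hat{z})=(f_\theta(y,s),1)$, followed by the same appeal to Picard--Lindel\"of and universal approximation. If anything, your version is more explicit about why the augmented field inherits the Lipschitz and linear-growth hypotheses and why the clock coordinate satisfies $s(\tau)=\tau$, steps the paper leaves implicit.
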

\vskip -4pt
}
\vskip -16pt

\paragraph{Connections to Dynamic Optimal Transport} 
Dynamic Optimal Transport~\cite{BenamouAndBrenier} aims to find the optimal transport plan to achieve the minimal transport cost between the original and target distribution. Here, we show that ImageFlowNet falls into the framework of dynamic optimal transport of images with ground distance based on the UNet representation. The proof is shown in Appendix~\ref{supp:proofs}.

\shadedText{
\begin{prop}
\label{prop: equivalence_DOT}
If we consider an image as a distribution over a 2D grid, ImageFlowNet is equivalently solving a dynamic optimal transport problem, as it meets three essential criteria: (1)~matching the density, (2) smoothing the dynamics, and (3) minimizing the transport cost, where the ground distance is the Euclidean distance in the latent joint embedding space.
\end{prop}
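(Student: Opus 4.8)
The plan is to exhibit an explicit dictionary between the pieces of the ImageFlowNet objective (Eqn~\eqref{eqn:loss_total}) and the three ingredients of the Benamou--Brenier formulation of dynamic optimal transport~\cite{BenamouAndBrenier}: boundary marginals linked by a continuity equation, a smooth velocity field, and the kinetic-energy functional whose minimizer is the 2-Wasserstein geodesic. First I would fix the measure-theoretic setting. An image $x$ is identified with a normalized measure $\mu_x$ on the $H\times W$ pixel grid; the UNet contraction path is read as a measurable map pushing $\mu_x$ to a measure $\rho_x^{(b)}$ on the latent space $\mathcal{Z}^{(b)}$ at each scale $b$, and the expansion path (Eqn~\eqref{eqn:imageflownet_expansion_path}) is its approximate inverse (decoding). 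The ground metric on $\mathcal{Z}^{(b)}$ is taken to be Euclidean, matching the claim that transport cost is measured in the joint embedding space.

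For criterion~(1), density matching, I would observe that the position-parameterized ODE (Eqn~\eqref{eqn:imageflownet_diffeq}) generates, for each duration $t_j-t_i$, a flow map $\Phi^{(b)}_{t_i\to t_j}$ on $\mathcal{Z}^{(b)}$; pushing a density forward along this flow map is exactly the unique weak solution of the continuity (Liouville) equation $\partial_\tau \rho_\tau^{(b)} + \nabla\!\cdot(\rho_\tau^{(b)} f_\theta^{(b)}) = 0$, so mass is conserved along the trajectory by construction. The reconstruction and visual-feature terms $l_r,l_v$, driven to zero, force $\widehat{x_j}=x_j$ and hence pin the terminal marginal $\rho_{t_j}^{(b)}$ to the encoding of the observed target, while the source marginal is the encoding of $x_i$; these are precisely the endpoint constraints $\rho_{t_0}=\rho_0$, $\rho_{t_1}=\rho_1$ of the dynamic OT problem.

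For criterion~(3), I would identify the smoothness regularizer $\lambda_s\|f_\theta^{(b)}\|_2^2$, accumulated over the integration interval and over all observation pairs in every series, with a Monte-Carlo estimate of the Benamou--Brenier kinetic energy $\int_{t_i}^{t_j}\!\int \rho_\tau^{(b)}(z)\,\|f_\theta^{(b)}(z)\|^2\,dz\,d\tau$: the trajectories traversed during training sample the latent space according to $\rho_\tau^{(b)}$, so averaging $\|f_\theta^{(b)}\|_2^2$ along them approximates the energy integral up to the fixed multiplier $\lambda_s$. Criterion~(2) is then a consequence: minimizing this quadratic functional with fixed endpoints yields constant-speed interpolants, the straightest and smoothest paths, and Lemma~2.2 of~\cite{Book_OptDataAnalysis} (already invoked for $l_s$) bounds the Lipschitz constant of $f_\theta^{(b)}$, which the paper's theory also requires. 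Combining the three pieces, an optimizer of Eqn~\eqref{eqn:loss_total} is, per scale $b$, an optimizer of the Lagrangian relaxation of $\min_{\rho,f}\int\!\!\int\rho\,\|f\|^2$ subject to the continuity equation and the two boundary marginals --- the Benamou--Brenier problem with latent-Euclidean ground cost.

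The main obstacle is this last identification. ImageFlowNet transports latent codes, not pixel densities, and it minimizes a penalized objective rather than a hard-constrained one, so the equivalence must be stated as ``solves the relaxed dynamic OT functional,'' and I would need to argue that the empirical trajectory measure used in training is a legitimate stand-in for the continuity-equation density $\rho_\tau^{(b)}$ --- the step that is necessarily heuristic and that I would flag explicitly. A secondary technical point is guaranteeing that the encode/decode pair is close enough to mutually inverse that the latent pushforward genuinely corresponds to a transport of $\mu_x$ in image space; here I would again lean on $l_r$ and $l_v$ being driven to zero at the optimum.
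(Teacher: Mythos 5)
Your proposal is correct and follows essentially the same route as the paper: both arguments identify ImageFlowNet with the Benamou--Brenier formulation of dynamic optimal transport by matching its three ingredients --- a density obeying the continuity equation with prescribed endpoint marginals, a Lipschitz velocity field, and the kinetic-energy functional with Euclidean ground cost in the latent space. If anything, you go further than the paper's own (quite informal) proof, which largely restates the Benamou--Brenier setup and asserts the correspondence: your explicit dictionary --- the ODE flow map as the weak solution of the continuity equation, $l_r$ and $l_v$ as the endpoint constraints, and $l_s$ accumulated along trajectories as a Monte-Carlo surrogate for $\int\!\!\int \rho\,\|f\|^2$ --- together with your honest flagging of the penalized-versus-constrained and latent-versus-pixel gaps, makes the claimed equivalence more precise than the version in Appendix~\ref{supp:proofs}.
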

}
\vskip -16pt

\section{Empirical Results}

\subsection{Preprocessing: Aligning Longitudinal Images with Image Registration}
Longitudinal image datasets often face the problem of spatial misalignment among images acquired at different time points. This phenomenon is nearly inevitable, as minor adjustments in position or angle can disrupt the exact alignment between pixels. To address this problem, we spatially align all images in a longitudinal series during the preprocessing stage using keypoint detection~\cite{SuperRetina} along with a perspective transform for retinal images, and affine registration~\cite{ANTS} for brain scans. More details are described in Appendix~\ref{supp:image_registration}.

\subsection{Baseline Methods}

\textbf{Image extrapolation methods} \hspace{2pt} is the most straightforward method for inferring future images. We included linear extrapolation~\cite{LinearInterp} and cubic spline extrapolation~\cite{CubicSpline} for comparison. 

\textbf{Time-conditional UNet~(T-UNet)}  integrates time by adding a time-embedding tensor to representations throughout UNet and is a key component of diffusion models~\cite{DDPM, DDIM}. The sinusoidal waveform is commonly used for time embedding, similar to the position encoding in transformers~\cite{Transformer}.

\textbf{Time-aware diffusion model~(T-Diffusion)} \hspace{2pt} is a modification to existing diffusion models by changing the diffusion time schedule. We introduce time awareness by considering the diffusion steps as equally spaced time intervals and dynamically adjusting the number of diffusion steps to match the time gap. Furthermore, our implementation is based on a specific diffusion model called image-to-image Schr\"odinger bridge (I2SB)~\cite{I2SB}, which directly maps from the input image to the output image without a noising and denoising process as in many others~\cite{DDPM, DDIM, BBDM}. This allows it to produce high-quality images at any arbitrary diffusion time step, which is critical to our use case.

\subsection{Datasets}

\paragraph{Retinal Images}
We used longitudinal retinal images from the METforMIN study~\cite{METforMIN, CUTS} that monitored patients across 12 clinical centers with geographic atrophy, an advanced form of age-related macular degeneration that is slowly progressive and can lead to loss of vision. The dataset contains fundus autofluroscence images of 132 eyes over 2-5 visits at irregular intervals for a duration of up to 24 months. 

\paragraph{Brain Multiple Sclerosis Images}
We used longitudinal FLAIR-weighted MRI scans from the LMSLS dataset~\cite{carass2017longitudinal} monitoring patients with Multiple Sclerosis~(MS) over an average of 4.4 time points for approximately 5 years. We obtained 79 longitudinal image series from this dataset.

\paragraph{Brain Glioblastoma Images}
We used longitudinal contrastive T1-weighted MRI scans from the LUMIERE dataset~\cite{LUMIERE_dataset} that tracked 91 glioblastoma~(GBM) patients who underwent a pre-operative scan and repeated post-operative scans for up to 5 years. We obtained a set of 795 longitudinal image series each with 2-18 time points. Only post-operative images are kept in each series to model the natural change of tissues after surgery.

For all datasets, we took caution to avoid data leakage: data were partitioned on the level of longitudinal series into train/validation/test sets, and images from the same patient would only go to the same set. Disease regions were delineated by standalone image segmentation networks, one for each dataset. We quantified image similarity using peak signal-to-noise ratio~(PSNR) and structural similarity index~(SSIM), residual map similarity using mean absolute error~(MAE) and mean squared error~(MSE), and disease region overlap using Dice-S{\o}rensen coefficient~(DSC) and Hausdorff distance~(HD). Disease regions were delineated by 3 standalone image segmentation networks, one for each dataset. Mean values and standard deviations are reported from 3 independent runs with different random seeds. Training and implementation details, as well as information about the metrics, can be found in Appendices~\ref{supp:implementation_details} and \ref{supp:metrics}.

\subsection{ImageFlowNet Forecasts Images and Preserves Visual Traits of Disease Progression}

As shown in Figure~\ref{fig:comparison}, the classic extrapolation methods and the time-conditional UNet often struggle to capture the changes in the disease-affected regions, implying that these methods are not quite capable of modeling the complex time evolution of the underlying disease processes. This is especially obvious when inspecting the residual maps. Extrapolation methods work significantly better on the glioblastoma dataset, where more historical data is available during inference. The diffusion model can represent atrophy growth in the retinal image, but not in the other cases. On the other hand, our proposed ImageFlowNet variants better represent disease-related changes.

As summarized in Table~\ref{tab:main}, our proposed methods achieve improved quantitative results, as demonstrated by higher image similarity, smaller residual maps, and better prediction of atrophy. The final ranking indicates that our SDE formulation using visual feature, contrastive learning, and trajectory smoothness regularizations is the best, while our ODE formulation with the three regularizations comes second, followed by the same two models penalized by the image reconstruction loss only.

\begin{figure*}[!tb]
    \centering
    \includegraphics[width=0.95\textwidth]{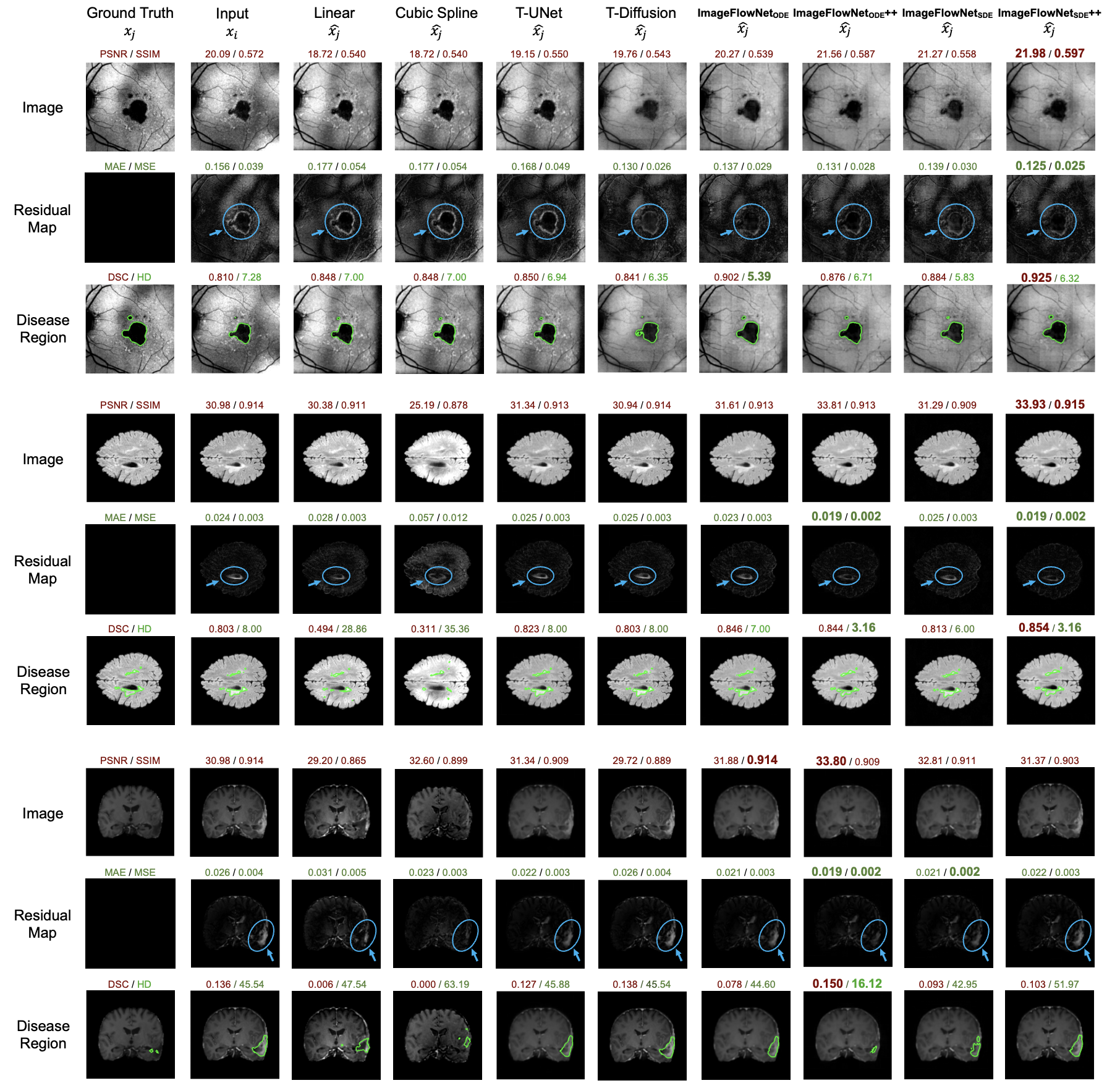}
    \caption{Qualitative comparison of image forecasting results on the retinal geographic atrophy, multiple sclerosis and glioblastoma datasets. ``++'' means using the 3 regularizations in Eqn~\eqref{eqn:loss_total}.}
    \label{fig:comparison}
    \vskip -6pt
\end{figure*}

\begin{table*}[!tb]
    % \vskip -10pt
    \caption{Image forecasting performance: $\mathrm{metric}(x_j, \widehat{x_j}). \quad \widehat{x_j} = \mathcal{F}(x_i, t_i, t_j), \forall i < j$. \\$^{\dag}$Extrapolation methods use the entire history. ``++'' means using the 3 regularizations in Eqn~\eqref{eqn:loss_total}.}
    \small
    \centering
    \scalebox{0.65}{
    \begin{tabular*}{1.51\linewidth}{lrllllllll}
    \toprule
    % &\multicolumn{2}{c}{Easy Cases}
    % &\multicolumn{2}{c}{Hard Cases}\\
    % \midrule
    Dataset & Metric & Linear$^{\dag}$ & Cubic Spline$^{\dag}$ & T-UNet & T-Diffusion &
    {\scriptsize ImageFlowNet$_{\textrm{ODE}}$} &
    {\scriptsize ImageFlowNet$_{\textrm{ODE}}$++} &
    {\scriptsize ImageFlowNet$_{\textrm{SDE}}$} &
    {\scriptsize ImageFlowNet$_{\textrm{SDE}}$++} \\
    && \cite{LinearInterp} & \cite{CubicSpline} & \cite{GuidedDiffusion} & \cite{I2SB} & \textbf{(ours)} & \textbf{(ours)} & \textbf{(ours)} & \textbf{(ours)} \\
    \toprule
    \textbf{Retinal} &
    PSNR~$\uparrow$ &
    20.22{\scriptsize \color{gray}{$\pm$ 0.00}} &
    19.79{\scriptsize \color{gray}{$\pm$ 0.00}} &
    22.06{\scriptsize \color{gray}{$\pm$ 0.33}} &
    22.29{\scriptsize \color{gray}{$\pm$ 0.33}} & \cc
    22.63{\scriptsize \color{gray}{$\pm$ 0.26}} & \cc
    \underline{22.74}{\scriptsize \color{gray}{$\pm$ 0.25}} & \cc
    22.32{\scriptsize \color{gray}{$\pm$ 0.29}} & \cc
    \textbf{22.89}{\scriptsize \color{gray}{$\pm$ 0.28}} \\
    \textbf{Images} &
    SSIM~$\uparrow$ &
    0.535{\scriptsize \color{gray}{$\pm$ 0.000}} &
    0.505{\scriptsize \color{gray}{$\pm$ 0.000}} &
    0.635{\scriptsize \color{gray}{$\pm$ 0.015}} &
    0.624{\scriptsize \color{gray}{$\pm$ 0.016}} & \cc
    0.646{\scriptsize \color{gray}{$\pm$ 0.012}} & \cc
    0.647{\scriptsize \color{gray}{$\pm$ 0.013}} & \cc
    \textbf{0.651}{\scriptsize \color{gray}{$\pm$ 0.015}} & \cc
    \textbf{0.651}{\scriptsize \color{gray}{$\pm$ 0.012}} \\
    \textit{all} & MAE~$\downarrow$ &
    0.163{\scriptsize \color{gray}{$\pm$ 0.000}} &
    0.177{\scriptsize \color{gray}{$\pm$ 0.000}} &
    0.126{\scriptsize \color{gray}{$\pm$ 0.005}} &
    0.122{\scriptsize \color{gray}{$\pm$ 0.004}} & \cc
    0.119{\scriptsize \color{gray}{$\pm$ 0.004}} & \cc
    \underline{0.118}{\scriptsize \color{gray}{$\pm$ 0.004}} & \cc
    0.124{\scriptsize \color{gray}{$\pm$ 0.005}} & \cc
    \textbf{0.115}{\scriptsize \color{gray}{$\pm$ 0.004}} \\
    \textit{cases} &
    MSE~$\downarrow$ &
    0.050{\scriptsize \color{gray}{$\pm$ 0.000}} &
    0.060{\scriptsize \color{gray}{$\pm$ 0.000}} &
    0.029{\scriptsize \color{gray}{$\pm$ 0.002}} &
    0.027{\scriptsize \color{gray}{$\pm$ 0.002}} & \cc
    \underline{0.024}{\scriptsize \color{gray}{$\pm$ 0.001}} & \cc
    \underline{0.024}{\scriptsize \color{gray}{$\pm$ 0.001}} & \cc
    0.027{\scriptsize \color{gray}{$\pm$ 0.002}} & \cc
    \textbf{0.023}{\scriptsize \color{gray}{$\pm$ 0.001}} \\
    $1$ & DSC~$\uparrow$ &
    0.833{\scriptsize \color{gray}{$\pm$ 0.000}} &
    0.756{\scriptsize \color{gray}{$\pm$ 0.000}} &
    0.872{\scriptsize \color{gray}{$\pm$ 0.012}} &
    0.867{\scriptsize \color{gray}{$\pm$ 0.014}} & \cc
    0.874{\scriptsize \color{gray}{$\pm$ 0.012}} & \cc
    0.873{\scriptsize \color{gray}{$\pm$ 0.011}} & \cc
    \textbf{0.885}{\scriptsize \color{gray}{$\pm$ 0.011}} & \cc
    \underline{0.883}{\scriptsize \color{gray}{$\pm$ 0.012}} \\
    & HD~$\downarrow$ &
    51.64{\scriptsize \color{gray}{$\pm$ 0.00}} &
    54.30{\scriptsize \color{gray}{$\pm$ 0.00}} &
    44.59{\scriptsize \color{gray}{$\pm$ 4.66}} &
    \underline{44.41}{\scriptsize \color{gray}{$\pm$ 4.74}} & \cc
    \textbf{42.68}{\scriptsize \color{gray}{$\pm$ 4.82}} & \cc
    47.10{\scriptsize \color{gray}{$\pm$ 4.89}} & \cc
    48.14{\scriptsize \color{gray}{$\pm$ 4.87}} & \cc
    45.14{\scriptsize \color{gray}{$\pm$ 4.89}} \\

    \cmidrule[0.2pt]{2-10}

    \textit{minor} &
    PSNR~$\uparrow$ &
    21.36{\scriptsize \color{gray}{$\pm$ 0.00}} &
    21.08{\scriptsize \color{gray}{$\pm$ 0.00}} &
    22.56{\scriptsize \color{gray}{$\pm$ 0.55}} &
    22.99{\scriptsize \color{gray}{$\pm$ 0.55}} & \cc
    23.23{\scriptsize \color{gray}{$\pm$ 0.34}} & \cc
    \underline{23.44}{\scriptsize \color{gray}{$\pm$ 0.33}} & \cc
    23.28{\scriptsize \color{gray}{$\pm$ 0.36}} & \cc
    \textbf{23.63}{\scriptsize \color{gray}{$\pm$ 0.43}} \\
    \textit{atrophy} &
    SSIM~$\uparrow$ &
    0.599{\scriptsize \color{gray}{$\pm$ 0.000}} &
    0.586{\scriptsize \color{gray}{$\pm$ 0.000}} &
    0.662{\scriptsize \color{gray}{$\pm$ 0.023}} &
    0.657{\scriptsize \color{gray}{$\pm$ 0.024}} & \cc
    0.682{\scriptsize \color{gray}{$\pm$ 0.018}} & \cc
    0.685{\scriptsize \color{gray}{$\pm$ 0.018}} & \cc
    \textbf{0.693}{\scriptsize \color{gray}{$\pm$ 0.018}} & \cc
    \underline{0.687}{\scriptsize \color{gray}{$\pm$ 0.019}}\\
    \textit{growth} &
    MAE~$\downarrow$ &
    0.141{\scriptsize \color{gray}{$\pm$ 0.000}} &
    0.147{\scriptsize \color{gray}{$\pm$ 0.000}} &
    0.121{\scriptsize \color{gray}{$\pm$ 0.007}} &
    0.114{\scriptsize \color{gray}{$\pm$ 0.007}} & \cc
    0.110{\scriptsize \color{gray}{$\pm$ 0.005}} & \cc
    \underline{0.108}{\scriptsize \color{gray}{$\pm$ 0.004}} & \cc
    0.109{\scriptsize \color{gray}{$\pm$ 0.005}} & \cc
    \textbf{0.106}{\scriptsize \color{gray}{$\pm$ 0.005}}\\
    $2$ & MSE~$\downarrow$ &
    0.038{\scriptsize \color{gray}{$\pm$ 0.000}} &
    0.042{\scriptsize \color{gray}{$\pm$ 0.000}} &
    0.027{\scriptsize \color{gray}{$\pm$ 0.003}} &
    0.024{\scriptsize \color{gray}{$\pm$ 0.002}} &\cc
    0.021{\scriptsize \color{gray}{$\pm$ 0.002}} & \cc
    \textbf{0.020}{\scriptsize \color{gray}{$\pm$ 0.002}} & \cc
    0.021{\scriptsize \color{gray}{$\pm$ 0.002}} & \cc
    \textbf{0.020}{\scriptsize \color{gray}{$\pm$ 0.002}}\\
    & DSC~$\uparrow$ &
    0.900{\scriptsize \color{gray}{$\pm$ 0.000}} &
    0.874{\scriptsize \color{gray}{$\pm$ 0.000}} &
    \textbf{0.949}{\scriptsize \color{gray}{$\pm$ 0.004}} &
    \textbf{0.949}{\scriptsize \color{gray}{$\pm$ 0.004}} & \cc
    0.936{\scriptsize \color{gray}{$\pm$ 0.009}} & \cc
    0.939{\scriptsize \color{gray}{$\pm$ 0.007}} & \cc
    0.948{\scriptsize \color{gray}{$\pm$ 0.005}} & \cc
    0.948{\scriptsize \color{gray}{$\pm$ 0.006}} \\
    & HD~$\downarrow$ &
    38.15{\scriptsize \color{gray}{$\pm$ 0.00}} &
    41.67{\scriptsize \color{gray}{$\pm$ 0.00}} &
    35.74{\scriptsize \color{gray}{$\pm$ 5.67}} &
    \textbf{29.40}{\scriptsize \color{gray}{$\pm$ 4.77}} & \cc
    34.59{\scriptsize \color{gray}{$\pm$ 6.20}} & \cc
    39.86{\scriptsize \color{gray}{$\pm$ 6.40}} & \cc
    \underline{31.66}{\scriptsize \color{gray}{$\pm$ 5.21}} & \cc
    36.98{\scriptsize \color{gray}{$\pm$ 6.04}} \\
  
    \cmidrule[0.2pt]{2-10}

    \textit{major} &
    PSNR~$\uparrow$ &
    19.02{\scriptsize \color{gray}{$\pm$ 0.00}} &
    18.41{\scriptsize \color{gray}{$\pm$ 0.00}} &
    21.40{\scriptsize \color{gray}{$\pm$ 0.33}} &
    21.68{\scriptsize \color{gray}{$\pm$ 0.32}} & \cc
    21.94{\scriptsize \color{gray}{$\pm$ 0.34}} & \cc
    \underline{22.01}{\scriptsize \color{gray}{$\pm$ 0.33}} & \cc
    \underline{22.01}{\scriptsize \color{gray}{$\pm$ 0.30}} & \cc
    \textbf{22.10}{\scriptsize \color{gray}{$\pm$ 0.31}}\\
    \textit{atrophy} &
    SSIM~$\uparrow$ &
    0.468{\scriptsize \color{gray}{$\pm$ 0.000}} &
    0.420{\scriptsize \color{gray}{$\pm$ 0.000}} &
    \underline{0.607}{\scriptsize \color{gray}{$\pm$ 0.017}} &
    0.588{\scriptsize \color{gray}{$\pm$ 0.017}} & \cc
    \underline{0.607}{\scriptsize \color{gray}{$\pm$ 0.014}} & \cc
    0.606{\scriptsize \color{gray}{$\pm$ 0.014}} & \cc
    \underline{0.607}{\scriptsize \color{gray}{$\pm$ 0.014}} & \cc
    \textbf{0.613}{\scriptsize \color{gray}{$\pm$ 0.013}}\\
    \textit{growth} &
    MAE~$\downarrow$ &
    0.186{\scriptsize \color{gray}{$\pm$ 0.000}} &
    0.210{\scriptsize \color{gray}{$\pm$ 0.000}} &
    0.135{\scriptsize \color{gray}{$\pm$ 0.006}} &
    0.131{\scriptsize \color{gray}{$\pm$ 0.006}} & \cc
    0.129{\scriptsize \color{gray}{$\pm$ 0.006}} & \cc
    0.129{\scriptsize \color{gray}{$\pm$ 0.006}} & \cc
    \underline{0.128}{\scriptsize \color{gray}{$\pm$ 0.005}} & \cc
    \textbf{0.126}{\scriptsize \color{gray}{$\pm$ 0.005}}\\
    $3$ & MSE~$\downarrow$ &
    0.063{\scriptsize \color{gray}{$\pm$ 0.000}} &
    0.080{\scriptsize \color{gray}{$\pm$ 0.000}} &
    0.032{\scriptsize \color{gray}{$\pm$ 0.003}} &
    0.030{\scriptsize \color{gray}{$\pm$ 0.002}} & \cc
    0.028{\scriptsize \color{gray}{$\pm$ 0.002}} & \cc
    0.028{\scriptsize \color{gray}{$\pm$ 0.002}} & \cc
    \textbf{0.027}{\scriptsize \color{gray}{$\pm$ 0.002}} &\cc
    \textbf{0.027}{\scriptsize \color{gray}{$\pm$ 0.002}}\\
    & DSC~$\uparrow$ &
    0.762{\scriptsize \color{gray}{$\pm$ 0.000}} &
    0.631{\scriptsize \color{gray}{$\pm$ 0.000}} &
    0.784{\scriptsize \color{gray}{$\pm$ 0.016}} &
    0.779{\scriptsize \color{gray}{$\pm$ 0.019}} & \cc
    0.807{\scriptsize \color{gray}{$\pm$ 0.014}} & \cc
    0.803{\scriptsize \color{gray}{$\pm$ 0.012}} & \cc
    \textbf{0.817}{\scriptsize \color{gray}{$\pm$ 0.016}} &\cc
    \underline{0.814}{\scriptsize \color{gray}{$\pm$ 0.017}}\\
    & HD~$\downarrow$ &
    65.97{\scriptsize \color{gray}{$\pm$ 0.00}} &
    67.73{\scriptsize \color{gray}{$\pm$ 0.00}} &
    61.43{\scriptsize \color{gray}{$\pm$ 7.26}} &
    60.36{\scriptsize \color{gray}{$\pm$ 7.37}} & \cc
    \textbf{51.28}{\scriptsize \color{gray}{$\pm$ 7.13}} & \cc
    54.79{\scriptsize \color{gray}{$\pm$ 7.19}} & \cc
    65.65{\scriptsize \color{gray}{$\pm$ 7.17}} & \cc
    \underline{53.81}{\scriptsize \color{gray}{$\pm$ 7.49}}\\
 
    \cmidrule[1pt]{1-10}

    \textbf{Brain} &
    PSNR~$\uparrow$ &
    30.07{\scriptsize \color{gray}{$\pm$ 0.00}} &
    29.56{\scriptsize \color{gray}{$\pm$ 0.00}} &
    31.55{\scriptsize \color{gray}{$\pm$ 0.20}} &
    31.57{\scriptsize \color{gray}{$\pm$ 0.23}} & \cc
    32.01{\scriptsize \color{gray}{$\pm$ 0.19}} & \cc
    32.34{\scriptsize \color{gray}{$\pm$ 0.20}} & \cc
    \underline{32.40}{\scriptsize \color{gray}{$\pm$ 0.20}} & \cc
    \textbf{32.41}{\scriptsize \color{gray}{$\pm$ 0.20}} \\
    \textbf{MS} &
    SSIM~$\uparrow$ &
    0.895{\scriptsize \color{gray}{$\pm$ 0.000}} &
    0.888{\scriptsize \color{gray}{$\pm$ 0.000}} &
    0.909{\scriptsize \color{gray}{$\pm$ 0.003}} &
    0.907{\scriptsize \color{gray}{$\pm$ 0.003}} & \cc
    0.914{\scriptsize \color{gray}{$\pm$ 0.002}} & \cc
    \textbf{0.915}{\scriptsize \color{gray}{$\pm$ 0.002}} & \cc
    0.913{\scriptsize \color{gray}{$\pm$ 0.002}} & \cc
    \textbf{0.915}{\scriptsize \color{gray}{$\pm$ 0.002}}\\
    \textbf{Images} &
    MAE~$\downarrow$ &
    0.028{\scriptsize \color{gray}{$\pm$ 0.000}} &
    0.030{\scriptsize \color{gray}{$\pm$ 0.000}} &
    0.024{\scriptsize \color{gray}{$\pm$ 0.000}} &
    0.024{\scriptsize \color{gray}{$\pm$ 0.001}} & \cc
    0.023{\scriptsize \color{gray}{$\pm$ 0.000}} & \cc
    \textbf{0.021}{\scriptsize \color{gray}{$\pm$ 0.000}} & \cc
    \textbf{0.021}{\scriptsize \color{gray}{$\pm$ 0.000}} & \cc
    \textbf{0.021}{\scriptsize \color{gray}{$\pm$ 0.000}} \\
    $4$ & MSE~$\downarrow$ &
    0.004{\scriptsize \color{gray}{$\pm$ 0.000}} &
    0.005{\scriptsize \color{gray}{$\pm$ 0.000}} &
    0.004{\scriptsize \color{gray}{$\pm$ 0.000}} &
    0.004{\scriptsize \color{gray}{$\pm$ 0.000}} & \cc
    \textbf{0.003}{\scriptsize \color{gray}{$\pm$ 0.000}} & \cc
    \textbf{0.003}{\scriptsize \color{gray}{$\pm$ 0.000}} & \cc
    \textbf{0.003}{\scriptsize \color{gray}{$\pm$ 0.000}} & \cc
    \textbf{0.003}{\scriptsize \color{gray}{$\pm$ 0.000}} \\
    & DSC~$\uparrow$ &
    0.739{\scriptsize \color{gray}{$\pm$ 0.000}} &
    0.682{\scriptsize \color{gray}{$\pm$ 0.000}} &
    0.774{\scriptsize \color{gray}{$\pm$ 0.007}} &
    0.771{\scriptsize \color{gray}{$\pm$ 0.007}} & \cc
    0.775{\scriptsize \color{gray}{$\pm$ 0.007}} & \cc
    \textbf{0.777}{\scriptsize \color{gray}{$\pm$ 0.007}} & \cc
    \textbf{0.777}{\scriptsize \color{gray}{$\pm$ 0.007}} & \cc
    0.774{\scriptsize \color{gray}{$\pm$ 0.007}} \\
    & HD~$\downarrow$ &
    22.73{\scriptsize \color{gray}{$\pm$ 0.00}} &
    26.23{\scriptsize \color{gray}{$\pm$ 0.00}} &
    22.00{\scriptsize \color{gray}{$\pm$ 1.30}} &
    \textbf{20.91}{\scriptsize \color{gray}{$\pm$ 1.23}} & \cc
    22.38{\scriptsize \color{gray}{$\pm$ 1.28}} & \cc
    21.72{\scriptsize \color{gray}{$\pm$ 1.16}} & \cc
    22.21{\scriptsize \color{gray}{$\pm$ 1.27}} & \cc
    \underline{21.28}{\scriptsize \color{gray}{$\pm$ 1.27}}\\

    \cmidrule[1pt]{1-10}

    \textbf{Brain} &
    PSNR~$\uparrow$ &
    35.32{\scriptsize \color{gray}{$\pm$ 0.00}} &
    33.60{\scriptsize \color{gray}{$\pm$ 0.00}} &
    35.73{\scriptsize \color{gray}{$\pm$ 0.13}} &
    35.49{\scriptsize \color{gray}{$\pm$ 0.17}} & \cc
    \underline{35.86}{\scriptsize \color{gray}{$\pm$ 0.12}} & \cc
    \textbf{35.90}{\scriptsize \color{gray}{$\pm$ 0.14}} & \cc
    35.77{\scriptsize \color{gray}{$\pm$ 0.12}} & \cc
    35.79{\scriptsize \color{gray}{$\pm$ 0.15}}\\
    \textbf{GBM} &
    SSIM~$\uparrow$ &
    0.929{\scriptsize \color{gray}{$\pm$ 0.000}} &
    0.895{\scriptsize \color{gray}{$\pm$ 0.000}} &
    0.935{\scriptsize \color{gray}{$\pm$ 0.001}} &
    \underline{0.940}{\scriptsize \color{gray}{$\pm$ 0.001}} & \cc
    \underline{0.940}{\scriptsize \color{gray}{$\pm$ 0.001}} & \cc
    \textbf{0.943}{\scriptsize \color{gray}{$\pm$ 0.001}} & \cc
    0.937{\scriptsize \color{gray}{$\pm$ 0.001}} & \cc
    0.939{\scriptsize \color{gray}{$\pm$ 0.001}} \\
    \textbf{Images} &
    MAE~$\downarrow$ &
    0.017{\scriptsize \color{gray}{$\pm$ 0.000}} &
    0.024{\scriptsize \color{gray}{$\pm$ 0.000}} &
    0.015{\scriptsize \color{gray}{$\pm$ 0.000}} &
    \textbf{0.014}{\scriptsize \color{gray}{$\pm$ 0.000}} & \cc
    \textbf{0.014}{\scriptsize \color{gray}{$\pm$ 0.000}} & \cc
    \textbf{0.014}{\scriptsize \color{gray}{$\pm$ 0.000}} & \cc
    0.015{\scriptsize \color{gray}{$\pm$ 0.000}} & \cc
    0.015{\scriptsize \color{gray}{$\pm$ 0.000}} \\
    $5$ & MSE~$\downarrow$ &
    0.002{\scriptsize \color{gray}{$\pm$ 0.000}} &
    0.005{\scriptsize \color{gray}{$\pm$ 0.000}} &
    \textbf{0.001}{\scriptsize \color{gray}{$\pm$ 0.000}} &
    0.002{\scriptsize \color{gray}{$\pm$ 0.000}} & \cc
    \textbf{0.001}{\scriptsize \color{gray}{$\pm$ 0.000}} & \cc
    \textbf{0.001}{\scriptsize \color{gray}{$\pm$ 0.000}} & \cc
    \textbf{0.001}{\scriptsize \color{gray}{$\pm$ 0.000}} & \cc
    \textbf{0.001}{\scriptsize \color{gray}{$\pm$ 0.000}} \\
    & DSC~$\uparrow$ &
    \underline{0.300}{\scriptsize \color{gray}{$\pm$ 0.000}} &
    0.287{\scriptsize \color{gray}{$\pm$ 0.000}} &
    0.258{\scriptsize \color{gray}{$\pm$ 0.018}} &
    0.253{\scriptsize \color{gray}{$\pm$ 0.017}} & \cc
    \textbf{0.302}{\scriptsize \color{gray}{$\pm$ 0.019}} & \cc
    0.266{\scriptsize \color{gray}{$\pm$ 0.018}} & \cc
    0.286{\scriptsize \color{gray}{$\pm$ 0.019}} & \cc
    0.287{\scriptsize \color{gray}{$\pm$ 0.017}} \\
    & HD~$\downarrow$ &
    \underline{170.44}{\scriptsize \color{gray}{$\pm$ 0.00}} &
    \textbf{165.62}{\scriptsize \color{gray}{$\pm$ 0.00}} &
    195.52{\scriptsize \color{gray}{$\pm$ 7.69}} &
    189.61{\scriptsize \color{gray}{$\pm$ 7.64}} & \cc
    198.19{\scriptsize \color{gray}{$\pm$ 7.78}} & \cc
    185.14{\scriptsize \color{gray}{$\pm$ 7.69}} & \cc
    196.37{\scriptsize \color{gray}{$\pm$ 7.74}} & \cc
    181.66{\scriptsize \color{gray}{$\pm$ 7.66}} \\

    \cmidrule[1pt]{1-10}
    $1,4,5$ & \textbf{Rank}~$\downarrow$ &
    {\large 6.3{\scriptsize \color{gray}{$\pm$ 1.6}}} &
    {\large 7.3{\scriptsize \color{gray}{$\pm$ 2.0}}} &
    {\large 4.9{\scriptsize \color{gray}{$\pm$ 1.4}}} &
    {\large 4.6{\scriptsize \color{gray}{$\pm$ 1.9}}} & \cc
    {\large 2.9{\scriptsize \color{gray}{$\pm$ 1.9}}} & \cc
    {\large \underline{2.3}{\scriptsize \color{gray}{$\pm$ 1.6}}} & \cc
    {\large 3.4{\scriptsize \color{gray}{$\pm$ 2.0}}} & \cc
    {\large \textbf{2.1}{\scriptsize \color{gray}{$\pm$ 1.3}}}\\

    $1,2,3,4,5$ & \textbf{Rank}~$\downarrow$ &
    {\large 6.5{\scriptsize \color{gray}{$\pm$ 1.3}}} &
    {\large 7.6{\scriptsize \color{gray}{$\pm$ 1.5}}} &
    {\large 4.9{\scriptsize \color{gray}{$\pm$ 1.5}}} &
    {\large 4.5{\scriptsize \color{gray}{$\pm$ 1.8}}} & \cc
    {\large 3.1{\scriptsize \color{gray}{$\pm$ 1.6}}} & \cc
    {\large \underline{2.7}{\scriptsize \color{gray}{$\pm$ 1.7}}} & \cc
    {\large 3.0{\scriptsize \color{gray}{$\pm$ 1.8}}} & \cc
    {\large \textbf{2.0}{\scriptsize \color{gray}{$\pm$ 1.2}}}\\

    \bottomrule

    \end{tabular*}
    }
\label{tab:main}
\end{table*}

\begin{figure*}[!tb]
    \centering
    \includegraphics[width=\textwidth]{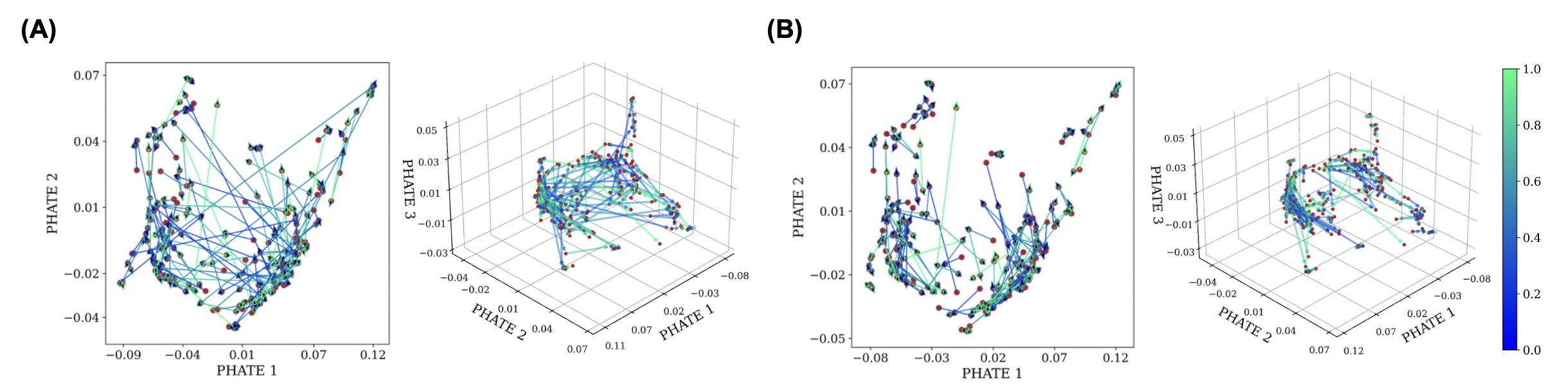}
    \caption{Joint representation space and the effect of contrastive learning regularization. Red dots are the observed disease states and arrows connect adjacent transitions. Normalized time is color coded. (A) Without regularization ($\lambda_c = 0$). (B) With contrastive learning regularization ($\lambda_c = 0.01$).}
    \label{fig:space_time}
\end{figure*}

Another notable phenomenon arises when we break down the retinal images into subsets. Compared to the other alternatives, our proposed methods show similar atrophy prediction performance (DSC, HD) for eyes with ``minor atrophy growth'', but significantly better for eyes with ``major atrophy growth''. Major/minor growth is defined by whether the ground truth masks differ by more than 0.1 in DSC. This implies that while the other methods may be on par with us in performing image reconstruction, our method is better at modeling the actual disease progression dynamics.

In Figure~\ref{fig:space_time}, we visualize the joint patient representations with and without contrastive learning regularization. The latent space of the ImageFlowNet bottleneck layer is visualized after projection into the 2D/3D PHATE space~\cite{PHATE}. Indeed, the contrastive loss helped organize better structures in the latent space, as is evident in fewer global-range connections and smoother transitions over time.

\subsection{Test-Time Optimization Improves Prediction Leveraging the Entire Patient History}
\label{sec:test_time_optimization}

While it only requires a single observation to infer the trajectory of a new patient, we can further improve ImageFlowNet performance if we use the entire patient history to perform test-time optimization. More specifically, we could take the trained ImageFlowNet and fine-tune the flow field $f_\theta$ with the previous measurements $\{x_1, x_2, ..., x_{n - 1}\}$, and then predict $x_n$ with the fine-tuned model.

\begin{wraptable}{r}{0.5\textwidth}
    \vskip -18pt
    \caption{Effect of test-time optimization.}
    \centering
    \scalebox{0.54}{
    \begin{tabular}{lllcccccc}
    \toprule
    Iterations & Learning Rate & PSNR$\uparrow$ & SSIM$\uparrow$ & MAE$\downarrow$ & MSE$\downarrow$ & DSC$\uparrow$ & HD$\downarrow$ \\
    \toprule
    N/A & N/A & 22.31 & 0.643 & 0.123 & 0.027 & 0.827 & 51.07 \\
    \specialrule{0.1pt}{0pt}{2pt}
    1 & $10^{-4}$ & 22.52 & \textbf{0.646} & 0.120 & \textbf{0.025} & \textbf{0.829} & \textbf{48.97} \\
    1 & $10^{-5}$ & 22.36 & 0.643 & 0.122 & 0.027 & 0.827 & 51.02 \\
    1 & $10^{-6}$ & 22.31 & 0.643 & 0.123 & 0.027 & 0.827 & 51.07 \\
    10 & $10^{-4}$ & 20.63 & 0.605 & 0.157 & 0.042 & 0.749 & 64.79 \\
    10 & $10^{-5}$ & \underline{22.59} & \textbf{0.646} & \textbf{0.119} & \textbf{0.025} & \textbf{0.829} & 49.92 \\
    10 & $10^{-6}$ & 22.36 & 0.644 & 0.122 & 0.027 & 0.827 & 51.01 \\
    100 & $10^{-4}$ & 19.63 & 0.571 & 0.177 & 0.056 & 0.726 & 70.12 \\
    100 & $10^{-5}$ & 20.92 & 0.614 & 0.152 & 0.040 & 0.759 & 58.76 \\
    100 & $10^{-6}$ & \textbf{22.61} & \textbf{0.646} & \textbf{0.119} & \textbf{0.025} & \textbf{0.829} & \underline{49.74} \\
    \bottomrule
    \end{tabular}
    }
\label{tab:test_time_optimization}
\end{wraptable}

We investigated the effect of test-time optimization using longitudinal series with at least 3 images from the retinal image dataset. The results for ImageFlowNet\textsubscript{ODE} are summarized in Table~\ref{tab:test_time_optimization}, and similar trends are observed in other model variants. This indicates the possibility of trading computations for performance when the patient's history is accessible.

\clearpage
\subsection{
\texorpdfstring{Modeling Alternative Trajectories from the Same Starting Point with ImageFlowNet\textsubscript{SDE}}{Modeling Alternative Trajectories from the Same Starting Point with ImageFlowNetSDE}
}

\begin{wrapfigure}{l}{0.7\textwidth}
\vskip -12pt
\centering
\includegraphics[width=0.64\textwidth]{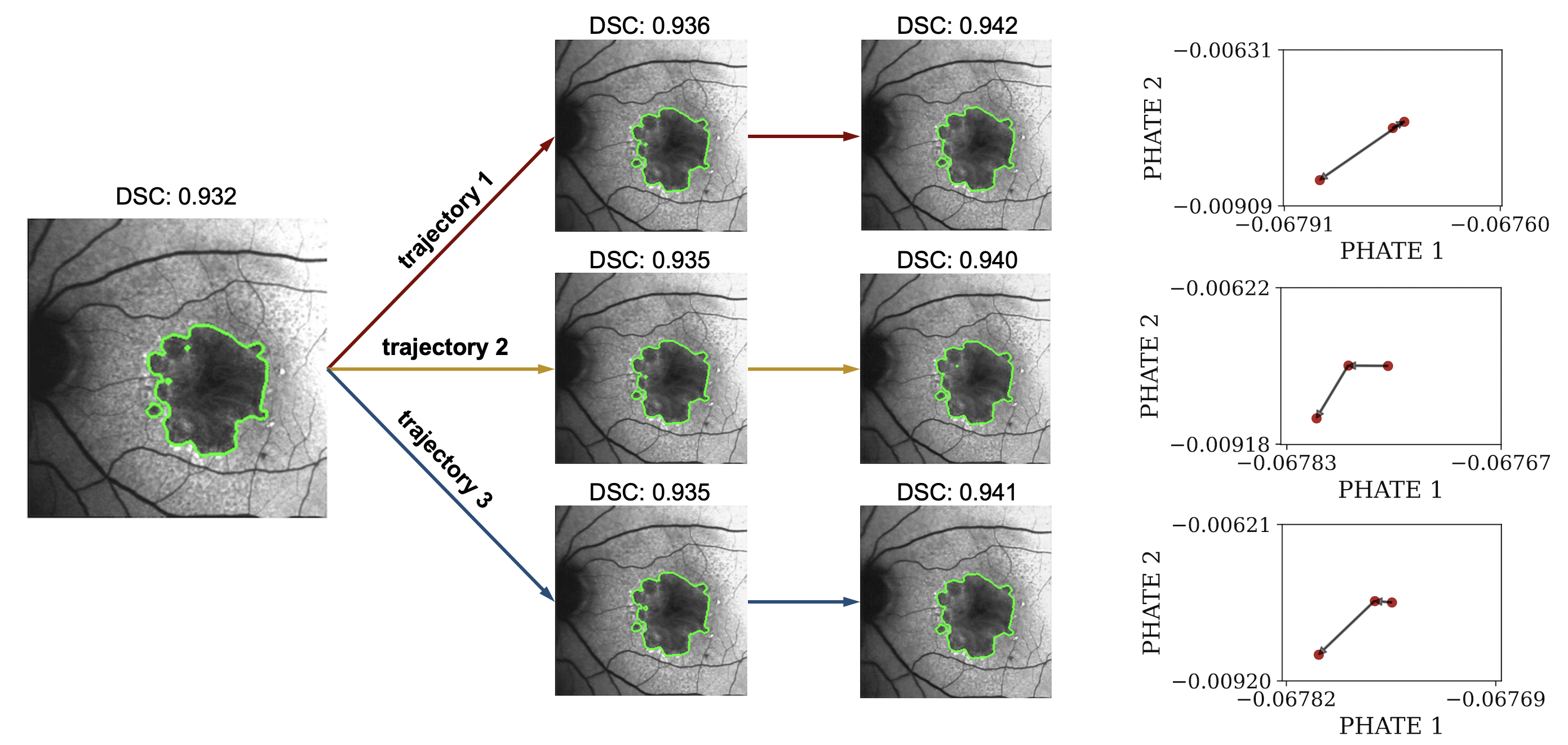}
\vskip -6pt
\caption{ImageFlowNet\textsubscript{SDE} alternative trajectories. Multiple inferences predict non-identical disease progressions, and their vectors in the joint representations space indeed follow different trajectories.}
\label{fig:multiple_trajectories}
\end{wrapfigure}

In Figure~\ref{fig:multiple_trajectories}, we demonstrate ImageFlowNet\textsubscript{SDE}'s ability to model alternative trajectories. The model infers several trajectories from the same initial image, as indicated by the varied predicted disease regions and the different representation vectors in the same PHATE space. The minimal variation among these inferences could stem from the absence of explicit encouragement to produce highly divergent trajectories during training, which might be an interesting direction for future research.

\subsection{Ablations Studies}

\begin{wraptable}{r}{0.5\textwidth}
\vskip -24pt
\centering
\caption{Flow field formulation.}
\label{tab:gradient_field}
\scalebox{0.7}{
    \begin{tabular}{lcccccc}
    \toprule
    & PSNR$\uparrow$ & SSIM$\uparrow$ & MAE$\downarrow$ & MSE$\downarrow$ & DSC$\uparrow$ & HD$\downarrow$ \\
    \toprule
    $f_\theta(z_t, t)$ & 22.42 & 0.643 & 0.123 & 0.027 & 0.872 & 48.38 \\
    \rowcolor{Gainsboro!60}
    $f_\theta(z_t)$ & \textbf{22.63} & \textbf{0.646} & \textbf{0.119} & \textbf{0.024} & \textbf{0.874} & \textbf{42.68} \\
    \bottomrule
    \end{tabular}
}
\caption{Latent representation.}
\label{tab:multiscale_ode}
\scalebox{0.56}{
    \begin{tabular}{lcccccc}
    \toprule
    & PSNR$\uparrow$ & SSIM$\uparrow$ & MAE$\downarrow$ & MSE$\downarrow$ & DSC$\uparrow$ & HD$\downarrow$ \\
    \toprule
    bottleneck only & 22.33 & 0.639 & 0.122 & 0.026 & 0.850 & 48.13 \\
    all unique resolutions & 22.49 & 0.643 & 0.122 & 0.025 & 0.859 & 43.39 \\
    \rowcolor{Gainsboro!60}
    all unique layers & \textbf{22.63} & \textbf{0.646} & \textbf{0.119} & \textbf{0.024} & \textbf{0.874} & \textbf{42.68} \\
    \bottomrule
    \end{tabular}
    }
\caption{Visual feature regularization.}
\label{tab:regu_visual_feature}
\scalebox{0.7}{
    \begin{tabular}{lcccccc}
    \toprule
    $\lambda_v$ & PSNR$\uparrow$ & SSIM$\uparrow$ & MAE$\downarrow$ & MSE$\downarrow$ & DSC$\uparrow$ & HD$\downarrow$ \\
    \toprule
    $0$ & 22.63 & 0.646 & 0.119 & \textbf{0.024} & \textbf{0.874} & \textbf{42.68} \\
    \rowcolor{Gainsboro!60}
    $0.001$ & \textbf{22.65} & \textbf{0.658} & \textbf{0.118} & \textbf{0.024} & 0.872 & 44.27 \\
    $0.01$ & 22.64 & 0.650 & 0.120 & 0.025 & 0.872 & 45.89 \\
    $0.1$ & 22.57 & 0.647 & 0.120 & 0.025 & 0.869 & 50.69 \\
    $1$ & 22.54 & 0.634 & 0.124 & 0.027 & 0.867 & 48.13 \\
    % $\lambda_v$ = 0 \& no MSE loss & 21.53 & 0.593 & 0.133 & 0.031 & 0.860 & 49.78 \\
    \bottomrule
    \end{tabular}
}
\caption{Contrastive regularization.}
\label{tab:regu_contrastive}
\scalebox{0.7}{
    \begin{tabular}{lcccccc}
    \toprule
    $\lambda_c$ & PSNR$\uparrow$ & SSIM$\uparrow$ & MAE$\downarrow$ & MSE$\downarrow$ & DSC$\uparrow$ & HD$\downarrow$ \\
    \toprule
    $0$ & 22.63 & 0.646 & 0.119 & \textbf{0.024} & 0.874 & 42.68 \\
    $0.001$ & 22.63 & 0.646 & 0.119 & 0.025 & 0.872 & 46.23\\
    \rowcolor{Gainsboro!60}
    $0.01$ & \textbf{22.65} & \textbf{0.652} & \textbf{0.118} & \textbf{0.024} & \textbf{0.875} & \textbf{42.18}\\
    $0.1$ & 22.38 & 0.651 & 0.121 & 0.025 & 0.871 & 45.30 \\
    $1$ & 22.25 & 0.644 & 0.121 & 0.025 & 0.868 & 46.85 \\
    \bottomrule
    \end{tabular}
}
\caption{Smoothness regularization.}
\label{tab:regu_smoothness}
\scalebox{0.7}{
    \begin{tabular}{lcccccc}
    \toprule
    $\lambda_s$ & PSNR$\uparrow$ & SSIM$\uparrow$ & MAE$\downarrow$ & MSE$\downarrow$ & DSC$\uparrow$ & HD$\downarrow$ \\
    \toprule
    $0$ & 22.63 & 0.646 & 0.119 & \textbf{0.024} & 0.874 & \textbf{42.68} \\
    $0.001$ & 22.38 & 0.649 & 0.123 & 0.027 & 0.870 & 46.91 \\
    $0.01$ & 22.65 & 0.648 & 0.119 & 0.024 & 0.870 & 45.71 \\
    \rowcolor{Gainsboro!60}
    $0.1$ & \textbf{22.70} & \textbf{0.657} & \textbf{0.118} & \textbf{0.024} & \textbf{0.878} & 47.44 \\
    $1$ & 22.69 & 0.655 & \textbf{0.118} & \textbf{0.024} & 0.875 & 45.16 \\
    \bottomrule
    \end{tabular}
    }
\vskip -30pt
\end{wraptable}

We performed ablations on (1) time vs. position parameterization of ODE, (2) single-scale vs. multiscale ODE, and (3) effects of 3 regularizations.

\paragraph{Flow Field Formulation}
\label{sec:gradient_field_formulation}
Previously, we decided on the $f_\theta(z_t)$ formulation analytically, and here we support our decision with empirical evidence~(Table~\ref{tab:gradient_field}).

\paragraph{Single-scale vs Multiscale ODEs}
The UNet architecture uses hierarchical hidden layers to extract multiscale representations. Starting at the image resolution and ending at the bottleneck layer (bottom of the ``U''), the model produces increasingly higher-level and more global representations. In this study, we analyze the advantages of multiscale ODEs. Moreover, there might be multiple hidden layers at the same resolution. On which representations should we perform trajectory inference?

To study this, we explored the following settings: (1) infer a single-scale $z_t$ from the bottleneck layer, (2) infer multiscale $\{ z_t \}$ at all layers and use distinct $f_\theta$ for each resolution, but all hidden layers of the same resolution share the same $f_\theta$, and (3) infer multiscale $\{ z_t \}$ at all layers and use distinct $f_\theta$ for each hidden layer. The empirical results as shown in Table~\ref{tab:multiscale_ode} indicate that modeling all representations separately would lead to the best performance.

Note: To avoid confusion, all of these hidden layers produce outputs that are bridged by skip connections from the contraction path to the expansion path.

\paragraph{Effects of Regularizations}
We ablated visual feature regularization (Table~\ref{tab:regu_visual_feature}), contrastive learning regularization (Table~\ref{tab:regu_contrastive}), and trajectory smoothness regularization (Table~\ref{tab:regu_smoothness}).

\section{Related Works}

\paragraph{Disease Progression Modeling in Longitudinal Medical Images}
Most existing methods for modeling disease progression operate in the vector space of hand-selected features. The event-based model~(EBM)~\cite{EBM1, EBM2} and discriminative EBM~\cite{DEBM} use bivariate mixture modeling and univariate Gaussian modeling, respectively. Both methods analyze disease progression at the group level and do not predict the outcome for individuals, the need of which led to sequence modeling techniques. Liu et al.~\cite{Disease_prog_xgboost} used XGBoost~\cite{XGBoost} to model the progression of breast cancer. Lei et al.~\cite{Disease_prog_polynomial_network} used a polynomial network to describe selected image statistics of patients with Alzheimer's disease. LSTM~\cite{LSTM} and Transformer~\cite{Transformer} have also been recruited to predict the progression of diseases such as Alzheimer's~\cite{Disease_prog_lstm_AD, Disease_prog_transformer1} and COVID-19~\cite{Disease_prog_lstm_covid19} in longitudinal medical images.

\paragraph{Disease Progression Modeling in the Image Space}
Due to the challenges discussed in the Introduction, very few works tackle the disease progression problem in the image space. STLSTM~\cite{STLSTM} and LDDMM~\cite{LDDMM} use LSTMs to model time evolution, which is a sequential model that does not handle irregular sampling over time --- a limitation commonly seen in video prediction models as well~\cite{VPN, PredRNN, MCNet, Lumiere}. ManifoldExtrap~\cite{ImageDiseaseProg_manifold} projects images to a StyleGAN latent space and performs a linear walk whose direction and distance are determined by the nearest neighbor found in this latent space. This method only models the transition between the baseline visit and the follow-up visit and does not model the continuous-time evolution. Upgrading the latent space navigation (e.g. using ODEs to model continuous time) might be a good solution, which we leave to future investigations. Lachinov and his collaborators present an approach~\cite{lachinov2023learning, mai2024deep} most similar to ours, as they also aim to model disease progression at the image level. However, their method is limited to segmentation output and does not explore multiscale trajectories, our innovative differential equation formulations, or our comprehensive suite of regularizations. These key elements distinguish our work from theirs.

\paragraph{Neural ODEs for Disease Progression Modeling}
To better accommodate irregular sampling, a common situation in longitudinal medical images, researchers later investigated differential equation models~\cite{Disease_prog_diffeq}, especially neural ODEs~\cite{NeuralODE}. Neural ODEs have been successfully applied to predict the dynamics of individual patients with Alzheimer's~\cite{ODE_alzheimer} and COVID-19~\cite{ODE_covid19}, but exclusively on biomakers and/or attributes extracted from images rather than on the medical images themselves.

\paragraph{Applying Neural ODEs to Images}
At the birth of neural ODEs~\cite{NeuralODE}, the possibility of applying them to images was discussed, but only as a drop-in replacement for residual blocks in a ResNet model, such as for image classification~\cite{ODE_image_cls}. UNode~\cite{UNode} and follow-ups~\cite{UNode_followup1, UNode_followup2} adapted it to work on an image-to-image task for the first time on image segmentation, but they treated the ODE as additional trainable parameters beyond convolution blocks and \textit{did not exploit its ability to model time}. Among similar endeavors, we are the first to use neural ODEs in the natural and intuitive manner by actually modeling how latent representations evolve over time in spatial-temporal data.

\section{Discussion on Clinical Implications}

ImageFlowNet offers two key clinically-relevant benefits. (1) It implicitly models all image features. Since progression is modeled in the image space, users can retrospectively analyze the trajectory of any image feature post-training without explicitly defining and modeling them as individual variables prior to training. (2) It provides intuitive visualizations of disease progression. For example, in retinal geographic atrophy, doctors can visually demonstrate to patients how much their eyes would worsen if left untreated.

\section{Conclusion}
We introduced ImageFlowNet, a deep learning framework which uses joint representation spaces and multiscale position-parameterized differential equations to infer trajectories in irregularly-sampled longitudinal images. We provided theoretical evidence to support its soundness and demonstrated its empirical effectiveness across three longitudinal medical image datasets. We believe that our method offers a promising approach to image-level trajectory analysis that can model progression in medical image datasets, a relatively underexplored yet highly promising area of research.

\section*{Limitations and Broader Impacts}
\subsection*{Limitations}
We have not yet studied the capabilities of using ImageFlowNet in a clinical context for patient diagnosis, which we plan to cover in a follow-up study. Meanwhile, two main areas for future improvement are computational complexity and latent space organization. (1) Modeling multiscale dynamics through neural differential equations requires significant GPU space. Our experiments can be run on a single 24 GB GPU, but it will pose challenges when we extend to dynamics of images higher than 256 $\times$ 256 resolution. For those applications, reducing feature dimension, using fewer scales, or finding more efficient computations may be necessary. (2) There are still room for improving the organization of the joint latent space. Future researchers can use additional patient-level information or image-level features to further organize the latent space according to disease state and severity.

\subsection*{Broader Impacts} 
Our work can help us understand spatial-temporal systems, state transitions in longitudinal images, and in particular disease progression in longitudinal medical images. To the best of our knowledge, our work has no negative societal impact.

\section*{Acknowledgements}
This work was supported in part by the National Science Foundation (NSF Career Grant~2047856) and the National Institute of Health (NIH~1R01GM130847-01A1, NIH~1R01GM135929-01).

% \newpage
% \clearpage
\bibliographystyle{unsrt}
\bibliography{references}

% Rename appendix figures.
\renewcommand{\thetable}{S\arabic{table}}
\renewcommand{\thefigure}{S\arabic{figure}}
% Update linking for appendix figures.
\renewcommand{\theHtable}{S\arabic{table}}
\renewcommand{\theHfigure}{S\arabic{figure}}
\setcounter{figure}{0}

\newpage
\renewcommand\appendixpagename{\centering\noindent\rule{\textwidth}{2pt} \LARGE Technical Appendices for\\ \inserttitle\\ \normalsize \noindent\rule{\textwidth}{1pt}}

\begin{appendices}

% \begin{center}
% \textbf{Chen Liu}$^{1 *}$ \quad
% \textbf{Ke Xu}$^{1 *}$ \quad
% \textbf{Liangbo L. Shen}$^{2}$ \quad
% \textbf{Guillaume Huguet}$^{3,4}$ \quad
% \textbf{Zilong Wang}$^{3,5}$\\
% \textbf{Alexander Tong}$^{3,4 \S}$ \quad
% \textbf{Danilo Bzdok}$^{3,5 \S}$ \quad
% \textbf{Jay Stewart}$^{2 \S}$ \quad
% \textbf{Jay C. Wang}$^{1,2,6 \S}$\\
% \textbf{Lucian V. Del Priore}$^{1 \S}$ \quad
% \textbf{Smita Krishnaswamy}$^{1 \S}$ \vspace{6pt}\\
% {\small $^1$Yale University \quad $^2$University of California, San Francisco \quad $^{3}$Mila - Quebec AI Institute}\\
% {\small $^{4}$Universit\'e de Montr\'eal \quad $^{5}$McGill University \quad $^{6}$Northern California Retina Vitreous Associates \vspace{6pt}}\\
% {\small *~These authors are joint first authors: \texttt{\{chen.liu.cl2482, k.xu\}@yale.edu}. \quad $\S$~Senior authors.}\\
% {\small Please direct correspondence to: \url{smita.krishnaswamy@yale.edu} or \url{lucian.delpriore@yale.edu}.}
% \vskip 2pt Main paper can be found at \url{https://arxiv.org/pdf/2406.14794}.
% \end{center}

% \vskip 18pt

\Large \textbf{Table of Contents} \normalsize

\startcontents[sections]
\printcontents[sections]{l}{1}{\setcounter{tocdepth}{2}}

\clearpage
\newpage

\section{Propositions and Proofs}
\label{supp:proofs}

\subsection{Proposition~\ref{prop:expressiveness_equivalence}}

\begin{prop}
Let $f_{\theta}$ be a continuous function that satisfies the Lipschitz continuity and linear growth conditions. Also, let the initial state $y(t_0) = y_0$ satisfy the finite second moment requirement $(\mathbb{E}[|y(t_0)|^2] < \infty)$. Suppose $z(t_0)$ is the latent representation learned by ImageFlowNet at the initial state corresponding to $t_0$. Then, our neural ODEs $($Eqn~\eqref{eqn:imageflownet_diffeq}$)$ are at least as expressive as the original neural ODEs $($Eqn~\eqref{eqn:neural_ode_diffeq}$)$, and their solutions capture the same dynamics.
\end{prop}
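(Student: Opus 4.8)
The plan is to show the claimed equivalence by a standard "state augmentation" argument: a time-dependent field $f_\theta(y(\tau),\tau)$ can always be rewritten as an autonomous field on an augmented state space, and conversely the position-parameterized field $f_\theta(z_\tau)$ is already a special (autonomous) case, so the two classes of dynamics coincide up to this embedding. Concretely, I would first recall the classical trick: given the non-autonomous system $\tfrac{\mathrm d y}{\mathrm d\tau}=f_\theta(y(\tau),\tau)$, introduce the augmented variable $\bar y=(y,s)$ and the autonomous field $\bar f(\bar y)=(f_\theta(y,s),\,1)$, with initial condition $\bar y(t_0)=(y_0,t_0)$. Since $s(\tau)=\tau$ along any solution, the $y$-component of the augmented autonomous system reproduces exactly the original trajectory. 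This shows every non-autonomous neural ODE is realized by a position-parameterized ODE on a slightly larger latent space — which is precisely the role played by the multiscale latent $z$ learned by ImageFlowNet, where the embedding has enough capacity to carry this extra coordinate.

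Next I would verify that the augmented field $\bar f$ inherits the hypotheses needed for the existence/uniqueness theory: if $f_\theta$ is Lipschitz and of linear growth in $(y,s)$, then $\bar f(\bar y)=(f_\theta(y,s),1)$ is Lipschitz and of linear growth in $\bar y$ (adding a constant component changes neither property), and the finite second moment $\mathbb E[\|y(t_0)\|^2]<\infty$ together with the deterministic shift by $t_0$ gives $\mathbb E[\|\bar y(t_0)\|^2]<\infty$. Hence the Picard–Lindelöf / Itô-type existence-and-uniqueness hypotheses hold for the autonomous system, and the two formulations have solutions that are in bijective correspondence via the projection $\bar y\mapsto y$. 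The reverse inclusion is immediate: any position-parameterized field $f_\theta(z_\tau)$ is the special case of $f_\theta(z_\tau,\tau)$ that ignores its last argument, so the original neural ODE class trivially contains ours as well; combined with the forward direction, the two classes are equi-expressive and "capture the same dynamics."

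The remaining point — and the one I expect to require the most care — is the interface with ImageFlowNet itself: one must argue that $z(t_0)$, the latent representation the UNet encoder actually produces, can play the role of (or be extended to) the augmented initial state $\bar y(t_0)$ without loss, i.e. that the learned embedding is expressive enough to absorb the explicit time coordinate. I would handle this by noting that the latent dimension is a design choice that can always accommodate one extra coordinate, and that the smoothness regularizer $\lambda_s\|f_\theta\|_2^2$ (invoked in the paper via Lemma 2.2 of \cite{Book_OptDataAnalysis}) is exactly what guarantees the Lipschitz hypothesis on $f_\theta^{(b)}$ in each scale $b\in[1,B]$, so the hypotheses of the proposition are met layerwise. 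The main obstacle is thus not the ODE theory, which is routine, but stating precisely in what sense "at least as expressive" is meant — I would make this explicit by defining expressiveness as the set of input-output maps $y_0\mapsto y(t_1)=\mathrm{ODESolve}(y_0,f_\theta,t_0,t_1)$ realizable over all admissible $f_\theta$, and then showing set containment in both directions via the construction above.
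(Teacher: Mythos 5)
Your proposal is correct and follows essentially the same route as the paper's proof: the paper's map $h(y,\tau) := y \oplus \tau$ is exactly your augmented state $\bar y = (y,s)$, and its induced field $\tilde f_\theta(z) = \frac{\partial h}{\partial y} f_\theta + \frac{\partial h}{\partial \tau}$ is your $\bar f(\bar y) = (f_\theta(y,s),\,1)$. If anything, you are more explicit than the paper about verifying that the augmented field inherits the Lipschitz and linear-growth hypotheses and about pinning down what ``at least as expressive'' means as set containment of realizable input--output maps.
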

    
We recall the two dynamic systems for original neural ODEs and our ODEs: 

Original neural ODEs:
\begin{equation*}
\frac{\mathrm{d} y(\tau)}{\mathrm{d}\tau} = f_\theta(y(\tau), \tau), \quad f_{\theta}: \mathbb{R}^n \times [0,T] \rightarrow \mathbb{R}^n
\end{equation*}

Our neural ODEs, with (1) superscript $\cdot^{(b)}$ omitted without loss of generality, (2) $z_\tau$ equivalently replaced by $z(\tau)$ for notation consistency, and (3) $f_\theta$ replaced by $\tilde{f}_\theta$ for distinction:
\begin{equation*}
\frac{\mathrm{d} z(\tau)}{\mathrm{d}\tau} = \tilde{f}_\theta(z(\tau)), \quad \tilde{f}_{\theta}: \mathbb{R}^m \rightarrow \mathbb{R}^m
\end{equation*}

\begin{proof}

\begin{theorem}[Picard-Lindel\"of~\cite{Book_DE}]
Let $D \subset \mathbb{R}^n$ be an open set, and let $f: D \times [0, T] \to \mathbb{R}^n$ be a continuous function that satisfies a Lipschitz condition in $y$ uniformly in $\tau$. Then, for any initial condition $y(t_0) = y_0$, there exists a unique solution to the initial value problem:
\[
\frac{\mathrm{d}y(\tau)}{\mathrm{d}\tau} = f(y(\tau), \tau), \quad y(t_0) = y_0.
\]
\end{theorem}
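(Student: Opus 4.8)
The plan is to prove Picard--Lindel\"of by the classical method of successive approximations, recast as a fixed-point problem and closed with the Banach contraction principle. The first and conceptually essential step is to replace the differential formulation by the equivalent \emph{integral equation}: a continuous function $y$ defined on an interval $I \ni t_0$ solves the initial value problem on $I$ if and only if
\[
y(\tau) = y_0 + \int_{t_0}^{\tau} f(y(s), s)\, \mathrm{d}s \quad \text{for all } \tau \in I.
\]
The forward implication follows by integrating the ODE, and the reverse by differentiating under the fundamental theorem of calculus, which is legitimate because $f$ and $y$ are continuous, so the integrand is continuous. This equivalence is what makes the fixed-point machinery available: the right-hand side defines the \emph{Picard operator} $(\Phi y)(\tau) := y_0 + \int_{t_0}^{\tau} f(y(s), s)\, \mathrm{d}s$, whose fixed points are precisely the solutions we seek.

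Next I would build a complete metric space on which $\Phi$ contracts. Since $D$ is open, pick $r > 0$ with $\overline{B}(y_0, r) \subset D$; the continuous $f$ is then bounded by some $M$ on the compact set $\overline{B}(y_0, r) \times [0,T]$, and by hypothesis it is $L$-Lipschitz in its first argument uniformly in $\tau$. For a radius $h > 0$ to be fixed, let $I = [t_0 - h, t_0 + h] \cap [0,T]$ and let $X$ be the continuous maps $I \to \overline{B}(y_0, r)$ under the supremum norm $\|\cdot\|_\infty$; as a closed subset of the Banach space $C(I, \mathbb{R}^n)$ it is complete. Two estimates then finish the local statement. \emph{Invariance}: taking $h \le r/M$ gives $\|(\Phi y)(\tau) - y_0\| \le M\,|\tau - t_0| \le r$, so $\Phi$ maps $X$ into itself. \emph{Contraction}: the Lipschitz bound gives $\|(\Phi y_1)(\tau) - (\Phi y_2)(\tau)\| \le L \int_{t_0}^{\tau} \|y_1(s) - y_2(s)\|\,\mathrm{d}s \le Lh\,\|y_1 - y_2\|_\infty$, so additionally requiring $h < 1/L$ makes $\Phi$ a contraction. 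Banach's fixed-point theorem then yields a unique fixed point of $\Phi$ in $X$, that is, a unique solution on $I$.

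The main obstacle is upgrading this \emph{local} existence and uniqueness to the full interval $[0,T]$ asserted by the theorem, because the step $h$ above is governed by $L$, $M$, $r$ and not by the starting point. I would handle it in one of two ways. Continuation: re-run the local argument from the right endpoint of each subinterval; since the Lipschitz condition is uniform in $\tau$, the admissible step length can be taken equal at every stage, so finitely many steps tile $[0,T]$, and local uniqueness forces the pieces to agree on overlaps. The slicker route, which I would prefer, replaces the ball by working on all of $C([t_0, T], \mathbb{R}^n)$ under the weighted Bielecki norm $\|y\|_\ast := \sup_{\tau} e^{-2L(\tau - t_0)}\|y(\tau)\|$; a one-line estimate shows $\|\Phi y_1 - \Phi y_2\|_\ast \le \tfrac{1}{2}\|y_1 - y_2\|_\ast$ on the \emph{entire} interval at once, giving global existence and uniqueness from a single application of Banach's theorem. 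The one genuine subtlety in either route is ensuring the solution never leaves $D$ in finite time; here the \emph{linear growth} hypothesis carried by $f_\theta$ in the surrounding Proposition is exactly what is needed, since a Gr\"onwall estimate then bounds $\|y(\tau)\|$ a priori on $[0,T]$ and rules out blow-up.
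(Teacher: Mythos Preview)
Your proof is a correct rendition of the classical Picard iteration argument, but the paper does not actually prove this theorem at all: it merely states Picard--Lindel\"of with a citation to a textbook and invokes it as a black-box lemma inside the proof of Proposition~4.1. So there is no ``paper's own proof'' to compare against --- the authors treat existence and uniqueness of ODE solutions as a standard prerequisite, not something to be established.

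That said, your write-up is sound and even goes a bit beyond what the stated theorem strictly asserts. One minor wrinkle: the Bielecki-norm route on all of $C([t_0,T],\mathbb{R}^n)$ tacitly requires $f$ to be defined on $\mathbb{R}^n \times [0,T]$ rather than just $D \times [0,T]$, so either $D=\mathbb{R}^n$ or you would need to extend $f$ (or argue a priori that trajectories stay in $D$, which is where your linear-growth remark kicks in). In the paper's setting this is harmless, since $f_\theta$ is a neural network defined globally and the linear-growth assumption is explicitly made in Proposition~4.1.
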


\textbf{Lipschitz Condition:}
$$
|f_\theta(y_1, \tau) - f_\theta(y_2, \tau)| \leq L |y_1 - y_2|
$$

\textbf{Linear Growth Condition:}
$$
|f_\theta(y, \tau)| \leq K (1 + |y|)
$$

Given these conditions, both the original neural ODE and the Latent Space Neural ODE have unique strong solutions.

Since both the original ODE and the Latent Space Neural ODE have unique solutions, we could then construct a bijective and sufficiently smooth mapping $ h: \mathbb{R}^n \times [0, T] \to \mathbb{R}^m $ such that $ z(\tau) = h(y(\tau), \tau) $.

We define a function $ h(y, \tau) $ that maps the state $ y(\tau) $ and time $ \tau $ to a new latent state $ z(\tau)$ as

$$h(y, \tau) := y(\tau) \oplus \tau ,$$

where $ \oplus $ denotes the concatenation of the state and time.

Then, as $h$ is bijective, the inverse function $ h^{-1} $ maps $ z(\tau) $ back to $ y(\tau) $ and $ \tau $. Given $ h(y, \tau) = y \oplus \tau $, the inverse is:

$$
h^{-1}(z) = \left(y(z_{\text{time}}), z_{\text{time}}\right)
$$

By the chain rule, the derivative of $ z(\tau) $ with respect to $\tau$ is:

$$
\frac{\mathrm{d}z(\tau)}{\mathrm{d}\tau} = \frac{\partial h}{\partial y} \frac{\mathrm{d}y}{\mathrm{d}\tau} + \frac{\partial h}{\partial \tau}
$$

Substituting the ODE for $ y(\tau) $, we get:

$$
\frac{\mathrm{d}z(\tau)}{\mathrm{d}\tau} = \frac{\partial h}{\partial y} f_\theta(y(\tau), \tau) + \frac{\partial h}{\partial \tau}
$$

We can then simply define the function $ \tilde{f}_\theta $ in the latent space such that it incorporates the dynamics from the original space:

$$
\tilde{f}_\theta(z(\tau)) := \frac{\partial h}{\partial y} f_\theta(y(\tau), \tau) + \frac{\partial h}{\partial \tau}
$$

The universal approximation theorem ensures that there exists a neural network parameterized by $ \theta $ that can approximate any continuous function, including $\tilde{f}_\theta(z(\tau))$.

\paragraph{Existence of Equivalent Function}

Since the neural network can approximate $\tilde{f}_\theta(z(\tau))$, there exists a function $\tilde{f}_\theta(z(\tau))$ in the latent space that can represent the same system behavior governed by $f_{\theta}(y(\tau),\tau)$ in the original space.

\paragraph{Proving Equivalence:}
Given $z(\tau) = h(y(\tau), \tau)$ and the corresponding functions $f_\theta$ and $\tilde{f}_\theta$, we have shown that the new ODE formulation:
$$
\frac{\mathrm{d}z(\tau)}{\mathrm{d}\tau} = \tilde{f}_\theta(z(\tau))
$$
captures the same dynamics as the original ODE:
$$
\frac{\mathrm{d}y(\tau)}{\mathrm{d}\tau} = f_\theta(y(\tau), \tau)
$$
\end{proof}

\clearpage
\newpage
\subsection{Proposition~\ref{prop: equivalence_DOT}}
\begin{prop}
% When treating an image as a 2D grid, ImageFlowNet is equivalent to a dynamic optimal transport problem by matching the density, smoothing the dynamics, and minimizing the transport cost. 
If we consider an image as a distribution over a 2D grid, ImageFlowNet is equivalently solving a dynamic optimal transport problem, as it meets three essential criteria: (1) matching the density, (2) smoothing the dynamics, and (3) minimizing the transport cost, where the ground distance is the Euclidean distance in the latent joint embedding space. \end{prop}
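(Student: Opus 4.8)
The plan is to exhibit ImageFlowNet's training objective as an instance of the Benamou--Brenier dynamic formulation of optimal transport~\cite{BenamouAndBrenier} and then verify each of its three structural ingredients. First I would recall that, writing $\rho_\tau$ for a time-indexed family of densities and $v_\tau$ for a velocity field, the dynamic OT problem is
\begin{equation*}
W_2^2(\mu_0,\mu_1) \;=\; \inf_{(\rho,v)} \int_{0}^{1}\!\!\int \rho_\tau(x)\,\|v_\tau(x)\|^2 \,dx\,d\tau,
\end{equation*}
subject to the continuity equation $\partial_\tau \rho_\tau + \nabla\!\cdot(\rho_\tau v_\tau)=0$ together with the endpoint constraints $\rho_0=\mu_0$, $\rho_1=\mu_1$. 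The three criteria in the statement are precisely (1) the endpoint constraints, (2) the continuity equation, and (3) the action (kinetic-energy) minimization, so the proof reduces to matching each of these to a component of ImageFlowNet.

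Next I would set up the dictionary. Given an observed pair $(x_i,x_j)$ at times $(t_i,t_j)$, regard each image as a measure on the $H\times W$ pixel grid, let $z_{t_i}$ be the UNet encoding of $x_i$ from Section~\ref{sec:learn_joint_patient_rep}, let $z_\tau$ solve the latent ODE $\tfrac{d}{d\tau}z_\tau = f_\theta(z_\tau)$ of Eqn~\eqref{eqn:imageflownet_diffeq} with initial condition $z_{t_i}$, and let $\rho_\tau$ be the image-distribution obtained by decoding $z_\tau$ through the expansion path Eqn~\eqref{eqn:imageflownet_expansion_path}. This produces a continuous curve $\tau\mapsto\rho_\tau$ of image-distributions that, after decoding, interpolates between the two observations. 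For criterion (1), the reconstruction term $l_r$ and the visual-feature term $l_v$ in Eqn~\eqref{eqn:loss_total} penalize exactly the endpoint mismatch $\rho_{t_j}\not\approx\mu_{x_j}$, while the encoder consistency on $x_i$ pins $\rho_{t_i}\approx\mu_{x_i}$, so at the optimum these play the role of the marginal constraints $\rho_0=\mu_0$, $\rho_1=\mu_1$. For criterion (2), since the smoothness regularizer $l_s=\lambda_s\|f_\theta\|_2^2$ together with the norm-bound argument cited in the paper forces $f_\theta$ to be Lipschitz, the latent trajectory is $C^1$ and the smooth decoder pushes it forward to a family $\rho_\tau$ that satisfies a continuity equation $\partial_\tau\rho_\tau+\nabla\!\cdot(\rho_\tau v_\tau)=0$, with $v_\tau$ the grid velocity induced by the latent motion; no mass is created or destroyed, which is the ``smoothing the dynamics'' condition.

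For criterion (3) I would observe that the integrated smoothness penalty along a trajectory,
\begin{equation*}
\int_{t_i}^{t_j}\!\|f_\theta(z_\tau)\|_2^2\,d\tau \;=\; \int_{t_i}^{t_j}\!\Big\|\tfrac{d}{d\tau}z_\tau\Big\|_2^2\,d\tau,
\end{equation*}
is exactly the Benamou--Brenier action of the latent path, whose infimum over paths with fixed endpoints is the squared geodesic distance in the latent metric; this is where the ground distance is read off as the Euclidean distance in the joint embedding space. Minimizing Eqn~\eqref{eqn:loss_total} over $f_\theta$ therefore simultaneously enforces the marginals, the continuity equation, and the minimal-action property, which together are the definition of a dynamic optimal transport solution, and I would assemble these three observations into the claimed equivalence.

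The hard part will be making the identification of $\|f_\theta\|_2^2$ with the transport cost a genuine minimization rather than merely a soft penalty: ImageFlowNet optimizes a single global field $f_\theta$ shared across all patients and times with finite weights $\lambda_s,\lambda_v$, not an unconstrained per-pair velocity field with hard marginal constraints, so arguing that the learned trajectories are actual $W_2$ geodesics (rather than just low-action curves) requires either a limiting argument ($\lambda_r,\lambda_v\to\infty$ with $\lambda_s$ fixed) or restricting attention to the realized curve and showing it is the action-minimizer within the family reachable by the ODE. A second delicate point is rigorously justifying the continuity equation for $\rho_\tau$ defined as a decoder output rather than a literal pushforward of a measure; I would handle this by treating the decoder as a locally smooth change of variables and invoking the standard equivalence between ODE flows and the transport/continuity equation. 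These are the steps I expect to spend the most care on; the endpoint matching of criterion (1) and the Lipschitz-implies-$C^1$-curve step of criterion (2) are essentially bookkeeping once the dictionary is fixed.
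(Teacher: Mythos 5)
Your proposal takes essentially the same route as the paper's proof: both cast the claim as a dictionary between ImageFlowNet's components and the three ingredients of the Benamou--Brenier dynamic formulation (marginal/endpoint constraints, a continuity equation driven by a Lipschitz velocity field, and minimization of the kinetic-energy action with Euclidean ground distance in the latent embedding space). If anything, your version is more explicit than the paper's --- the paper largely restates the Benamou--Brenier setup and asserts the correspondence, whereas you pin each criterion to a specific loss term and candidly flag the soft-penalty-versus-hard-constraint issue and the decoder-pushforward subtlety, neither of which the paper's own proof addresses.
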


\begin{proof}

ImageFlowNet can alternatively be viewed in the context of a dynamic optimal transport framework, which aims to determine the optimal plan $\pi$ to transport mass from an initial distribution $\mu$ to a target distribution $\nu$ for a fixed state interval $[\tau_i, \tau_j]$. The task meets three requirements of dynamic optimal transport: (1) matching the density, (2) smoothing the dynamics, and (3) minimizing the transport cost. The ground distance in the latent joint embedding space is the Euclidean distance.

\paragraph{Matching the density}

The image is a 2D grid, and the distribution for the pixel intensities is $\mu$ at $\tau_i$ and $\nu$ at $\tau_j$ on this grid. $\mu$ and $\nu$ are defined on measure space $\mathcal{X} \subset \mathbb{R}^n $ and $\mathcal{Y} \subset \mathbb{R}^n$ respectively. The set of all joint probability measures on $\mathcal{X} \times 
\mathcal{Y}$ is denoted as $\Pi(\mu,\nu)$ and $c(x,y)$ is the cost of moving a mass unit from the original distribution $\mu$ at state $\tau_i$ to the target distribution $\nu$ at state $\tau_j$. Then, the distance between the two distributions $\mu$ and $\nu$ is the p-Wasserstein distance:

\vskip -12pt
\begin{equation*}
W(\mu,\nu)_p := \left( \inf_{\pi \in \Pi(\mu,\nu)} \int_{\mathcal{X} \times \mathcal{Y}} c(x,y) d\pi(x,y)\right)^{\frac{1}{p}}, \text{ where } p \geq 1
\end{equation*}
\vskip -2pt

Benamou \& Brenier~\cite{BenamouAndBrenier} present a dynamic view of optimal transport, which links to differential equations. For the state interval $[\tau_i, \tau_j]$, there is a smooth and status-dependent density $P(z,\tau) \geq 0$ with $\int_{\mathbb{R}^n} P(z,\tau) dz = 1, \forall \tau \in [\tau_i, \tau_j]$, and a velocity fields $f(z, \tau)$ that obeys the continuity equation:

\vskip -8pt
\begin{equation*}
\partial_{\tau}P + \nabla \cdot (Pf) = 0, \text{ with } \tau \in [\tau_i, \tau_j] \text{ and } z \in \mathbb{R}^n \text{, where } P(\cdot, \tau_i) = \mu, \text{ } P(\cdot, \tau_j) = \nu
\end{equation*}
\vskip -4pt

\paragraph{Smoothing the dynamics}

The velocity fields $f(z,\tau)$ follows the Lipschitz condition $ |f(z_1, \tau) - f(z_2, \tau)| \leq L |z_1 - z_2| \text{ where } L > 0$, which ensures a smooth and controlled transport process. With the following setup, Benamou \& Brenier~\cite{BenamouAndBrenier} show that the Wasserstein distance with order 2 ($W_2$) is:

\vskip -10pt
\begin{equation*}
W(\mu,\nu)_2^{2} = \inf_{(p,f)} \int_{\mathbb{R}^n}\int_{\tau_i}^{\tau_j} P(z,\tau)\|f(z,\tau)\|^2 d\tau dz     
\end{equation*}
\vskip -4pt

\paragraph{Minimizing the transport cost}

Based on the main theorems in~\cite{TrajectoryNet, MIOFlow}, this problem aims to find the trajectory $f$ that minimizes the transport cost on the path space $\mathbb{R}^n$, we define the ground distance in the latent joint embedding space to be the Euclidean distance: 

\vskip -8pt
\begin{equation*}
W(\mu,\nu)_2^{2} = \inf_f \mathbb{E}\left[ \int_{\tau_i}^{\tau_j} \|f(z_{\tau},\tau) \|^2d\tau \right] \text{ s.t. } \frac{\partial z(\tau)}{\partial \tau} = f_{\theta}(z_{\tau}, \tau), \text{ } z_{\tau_i}  \sim \mu, \text{ } z_{\tau_j}  \sim \nu    
\end{equation*}
\vskip -4pt

Here, $f_{\theta}$ follows the ODE or the SDE. 

With the above setups, ImageFlowNet is equivalent to a dynamic optimal transport problem trying to match the density at different states. 

\end{proof}

\clearpage
\newpage
\section{Additional Background on Longitudinal Image Data}
Longitudinal image datasets, including but not limited to retinal images or even medical images, often come with several challenges: \fancynumber{1} high dimensionality, \fancynumber{2} temporal sparsity, \fancynumber{3} sampling irregularity, and \fancynumber{4} spatial misalignment.

\fancynumber{1} \textbf{High Dimensionality} is intrinsic to image data. For images with height of $H$ pixels, width of $W$ pixels and $C$ image channels, the dimensionality of the data is $\mathbb{R}^{H \times W \times C}$, which can easily go beyond a hundred thousand dimensions: a small image of $256 \times 256 \times 3$ has 196.6 thousand dimensions. Such high dimensionality is rarely encountered by most methods in time series prediction and temporal dynamics modeling~\cite{MIOFlow, tong2023simulation, nguyen2023causal, zapatero2023trellis, DynGFN, chen2024similarity, xiao2024xtsformer}.

\label{supp:additional_background}
\begin{figure*}[!thb]
    \centering
    \includegraphics[width=0.8\textwidth]{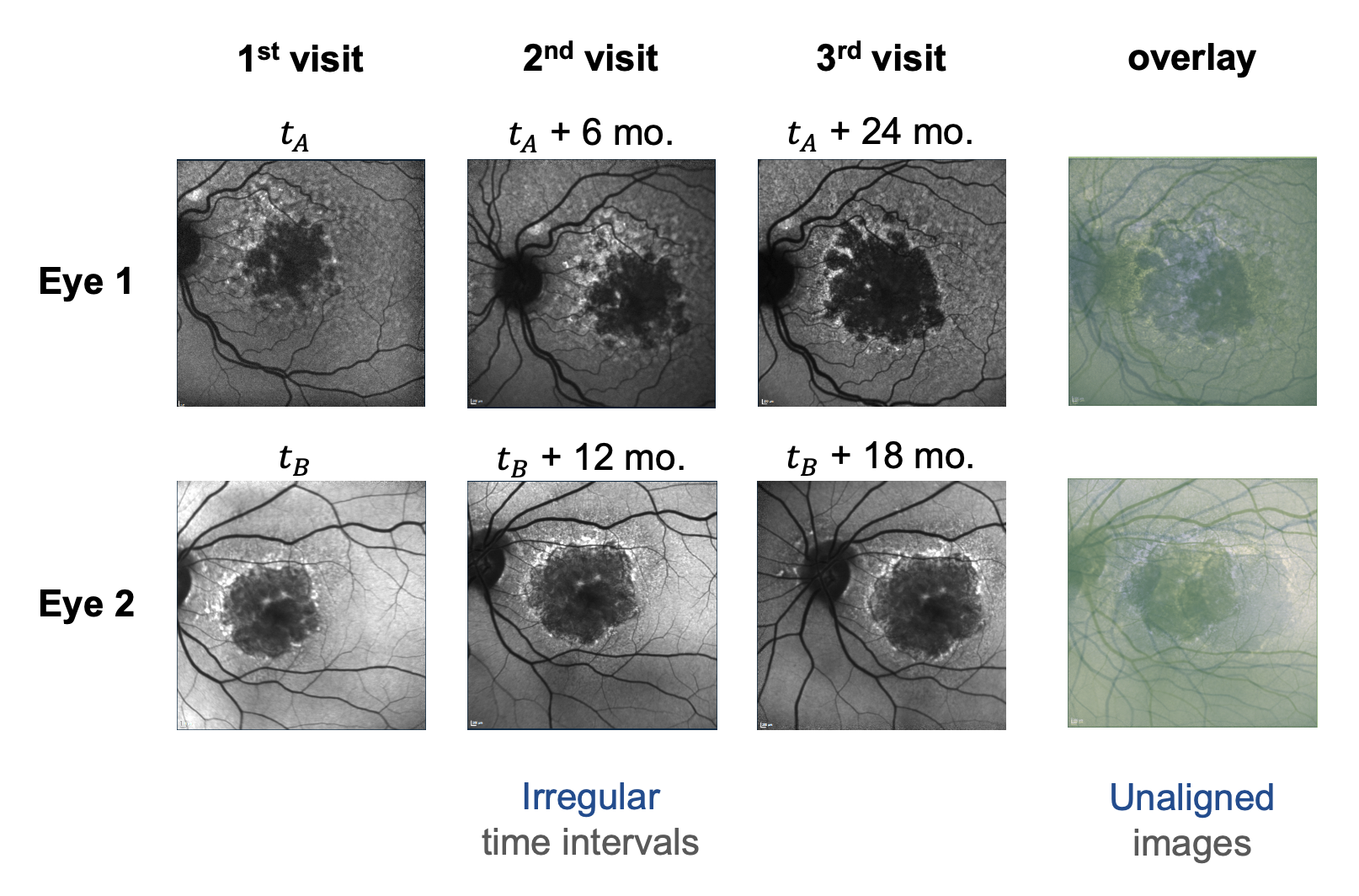}
    \caption{Temporal sparsity, sampling irregularity and spatial misalignment in longitudinal images.}
    \label{fig:data}
\end{figure*}

\fancynumber{2} \textbf{Temporal Sparsity} is especially common in longitudinal images in healthcare, as images are usually acquired at separate visits of the patient, where the time gap can be several months or years. In contrast, a relatively well-studied adjacent field is video data~\cite{bar2024lumiere, ho2022imagen, hoppe2022diffusion}, where the frame rate can easily be 60 Hz or higher. This renders our data of interest easily $10^8$ times sparser compared to the better studied video data.

\fancynumber{3} \textbf{Sampling Irregularity} is also ubiquitous in clinical practice, both \textit{within} and \textit{among} longitudinal image series. \textit{Within-series} irregularity means that the visits are not necessarily evenly distributed for the same patient over time. \textit{Among-series} irregularity means that different patients do not follow the same readmission schedule either --- \textit{in terms of both time intervals and number of visits}. Times for visits can significantly vary based on doctors' evaluation of the condition, the availability of doctors and/or imaging facilities, and the patient's own preferences, among others. This feature defies the assumptions of most methods that require regular sampling or common sampling~\cite{MIOFlow, survey_video_diffusion}.

\fancynumber{4} \textbf{Spatial Misalignment} is often seen in longitudinal medical images too. Indeed, it is almost impossible to enforce pixel-perfect alignment of images acquired at different visits. Luckily, this problem can be addressed by image registration without any compounding effect with the temporal sparsity or sampling irregularity issues. See Appendix~\ref{supp:image_registration} for an illustration of image registration.

Temporal sparsity, sampling irregularity, and spatial misalignment are illustrated in Figure~\ref{fig:data}. These properties and challenges listed above lead to \textbf{a fairly unique area of research} that is \textbf{largely underexplored but highly interesting} to healthcare professionals.

Consider retinal imaging as an example. Most existing approaches to estimate disease progression in retinal imaging
data do not operate in the image space, but rather in a vector space of a few clinical features extracted from the images. Examples of these derived statistics include the area of geographic atrophy lesions~\cite{GA_prog_pred1}, the number of lesions~\cite{GA_prog_pred2}, the lesion perimeter~\cite{GA_prog_pred1}, its prior observed growth rate~\cite{GA_prog_pred2}, the presence and pattern of hyperfluorescence around the border of a lesion~\cite{Shen2021LocalProgression}. Although these approaches have been effective, they compress the rich context in the images to just a few metrics, and the output is an oversimplified representation of the disease states. This simplification overlooks the nuanced variations and complexities that are discarded during the feature extraction process and limits the interpretability of the output to a few preselected scalar-valued features.

In contrast, our proposed ImageFlowNet capitalizes on the extensive information available in the image to provide a nuanced representation of future conditions and also addresses the limitations of traditional metrics-based methodologies by offering a more dynamic and detailed visualization of disease progression. This method gives healthcare professionals an intuitive understanding of the expected progression of the disease and allows them to provide patients with a visual forecast that goes beyond mere numerical data.

We hope that our method can establish a new standard in the discipline and potentially transform clinical practices in areas including but not limited to ophthalmology or neurology, with the help of the latest imaging and measurement techniques~\cite{hussain2022modern, photoacustic_tomography} as well as computational tools for disease diagnosis~\cite{SkinCancerCls, BreastCancerDet, SkinCancerDet}, risk prediction~\cite{RiskPred1, RiskPred2}, uncertainty quantification~\cite{uncertainty_quantification1, uncertainty_quantification2}, planning~\cite{disease_planning1, RL1, Optimization1, RL2}, and patient care~\cite{Pred_treatment_Longitudinal, xie2022deepvs, xie2021vitalhub}.

\section{Additional Background on Why Time-Awareness is Important}

Solving our problem outlined in \textcolor{YaleBlue}{Section~\ref{sec:prelim}} with deep learning requires designing and optimizing a model $\mathcal{F}: (\mathbb{R}^{H \times W \times C}, \mathbb{R}, \mathbb{R}) \rightarrow \mathbb{R}^{H \times W \times C}$, such that $\widehat{x_j} = \mathcal{F}(x_i, t_i, t_j)$ and $\widehat{x_j} \approx x_j$.

In most existing image-to-image tasks, the mapping between each pair of input $x_i$ and output $x_j$ obeys the same transformation rules, and hence their models are designed to be time-agnostic. For example, in denoising~\cite{ImageDenoising, DeepImDenoising}, $x_j$ is the noise-free version of $x_i$; in super-resolution~\cite{Image_superresolution, SRCNN, superresolution1, superresolution2}, $x_j$ is higher in resolution than $x_i$ by a fixed factor; in reconstruction~\cite{NonsmoothNonconvexRecon, reconstruction1, fastMRI, reconstruction2, ding2023learned, bian2022learnable}, $x_j$ is the transformed version of $x_i$ through a fixed set of rules guided by physics; in contrast mapping~\cite{Contrast_mapping1, Contrast_mapping2, Contrast_mapping3, Contrast_mapping4, Contrast_mapping5}, $x_j$ represents the effect of staining or contrast agents when applied to $x_i$; and in segmentation~\cite{Segmentation1, Segmentation2, Segmentation3, Segmentation4, Segmentation5, Segmentation6}, $x_j$ returns a label map describing the anatomical or functional segments in $x_i$. For these purposes, time-agnostic models, such as UNet or most diffusion models\footnote{While diffusion models have modules that can encode time, many variants are used in a time-agnostic manner for tasks like denoising or super-resolution, where ``time'' is no different from ``iteration''.} remain competitive.

However, in our scenario, the output image is a function of both the input image and time. Given the same input image $x_i$, it will not end up at the same output image if the time interval changes. An image showing a disease 2 years after onset may look very different compared to 2 days after onset. In such cases, attempting to solve this problem using a model without time-modeling capabilities would be fundamentally ill-posed. In short, the spatial-temporal problem requires a spatial-temporal solution, which inspired our development of ImageFlowNet.

\clearpage
\newpage

\section{Image Registration}
\label{supp:image_registration}

\subsection{Retinal Images}
For all images, we extracted descriptive keypoints with SuperRetina~\cite{SuperRetina}, a high-quality keypoint detector trained on retinal images. Then we identified the keypoint correspondences for each image pair in each longitudinal series with a $k$-nearest-neighbor matcher and considered any image pair that has at least 15 keypoint correspondences a successful match. Next, we selected the image that produced the most successful matches as the ``anchor image''. Finally, we aligned all images in the longitudinal series towards the ``anchor image'' using perspective transformation so that the degree of freedom is constrained to the adjustment of camera angle or position. As a post-processing step, for each longitudinal series, we cropped all images with the biggest common foreground square so that no image contained any background pixel outside the retina region.

The image registration process for a pair of images from the same longitudinal series is illustrated in Figure~\ref{fig:registration}. It can be seen that all veins are aligned in the resulting images while atrophy borders are not. This is expected from perspective transformation and is exactly desirable for our task. 

\begin{figure*}[!thb]
    \centering
    \includegraphics[width=\textwidth]{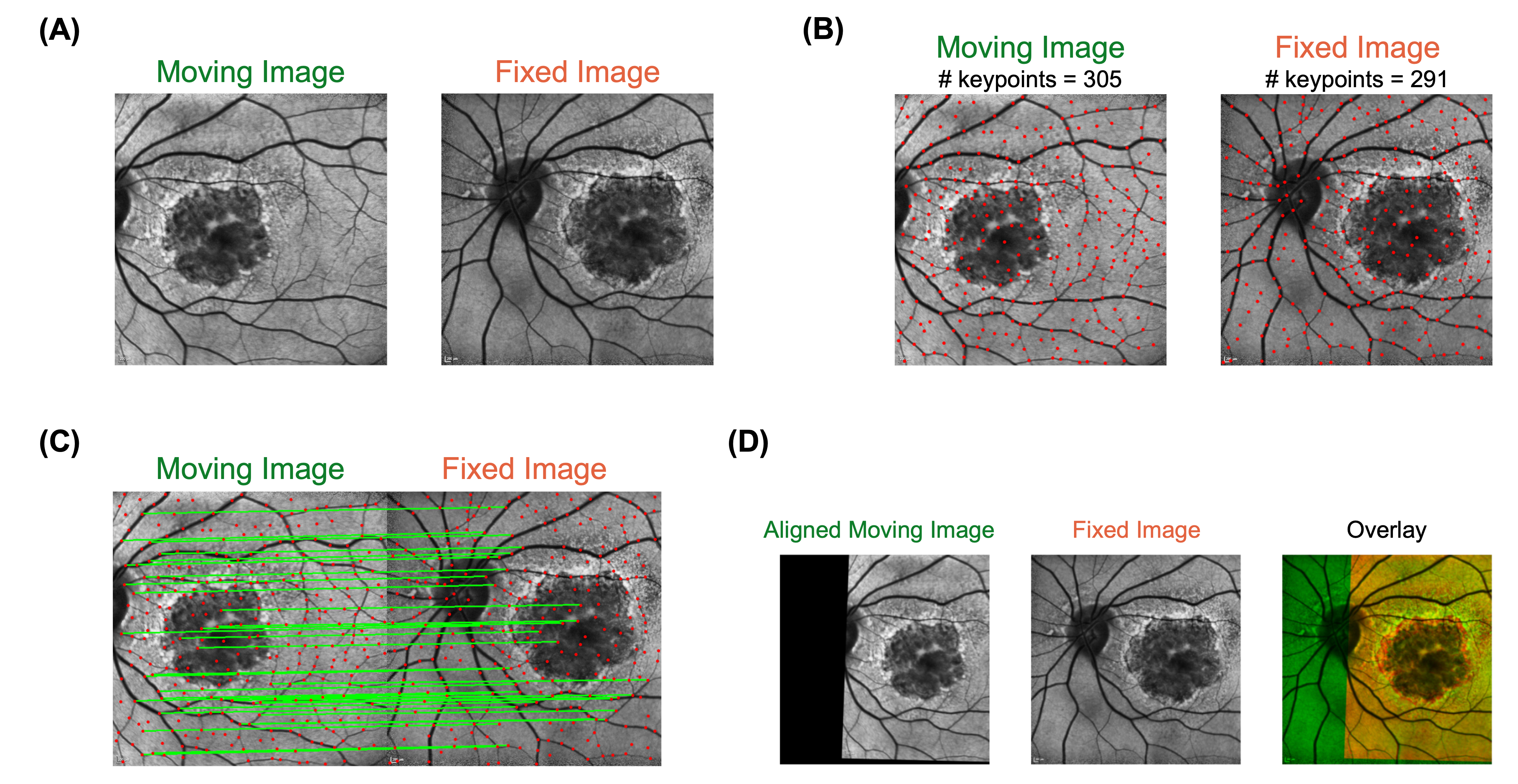}
    \caption{Our image registration pipeline. (A) Moving and fixed images come from the same eye at different time points. (B) SuperRetina is used to detect consistent and descriptive keypoints. (C) Keypoints are matched by descriptor similarity and filtered by distance heuristics. (D) The moving image is aligned under the constraint of a perspective transformation. }
    \label{fig:registration}
\end{figure*}

\subsection{Brain Multiple Sclerosis Images}
These images were already registered. No additional work was done.

\subsection{Brain Glioblastoma Images}
We used the scans in the ``DeepBraTumIA'' folders, which were registered to a common atlas, but the registration did not adequately align the scans in each longitudinal series. We used the Python tool from ANTS~\cite{ANTS} to perform \textit{Affine} followed by \textit{Diffeomorphic} registration with [4, 2, 1] iterations to align each scan towards the first scan in series.

\clearpage
\section{Implementation Details}
\label{supp:implementation_details}

\paragraph{Architectures} The proposed ImageFlowNet combines UNet and Neural ODEs. The UNet model follows the time-conditional UNet implementation in Guided Diffusion~\cite{GuidedDiffusion}. Neural ODEs are implemented with torchdiffeq~\cite{torchdiffeq}.

\paragraph{Data Augmentation} We used the albumentations package~\cite{Albumentations} to perform flipping, shifting, scaling, rotation, random brightness, and random contrast. We also make the UNet training a denoising process by adding random Gaussian noise to the input.

\paragraph{Hyperparamters and training details} All experiments were performed on a SLURM server, where each job was allocated either an NVIDIA A100 GPU, an NVIDIA A5000 GPU, or an NVIDIA RTX 3090 GPU. All jobs can be completed within 2-5 days on a single GPU with 8 CPU cores. T-Diffusion usually takes the longest to train. ImageFlowNet$_\textrm{SDE}$ variant may require a 40-GB GPU (sometime that will still hit an OOM error if running too many function evaluations in the SDE) while all other methods can be trained on a 20-GB GPU. Experiments shared the same set of hyperparameters: learning rate = 0.0001, batch size = 64, number of epochs = 120. Adam with decoupled weight decay~(AdamW)~\cite{AdamW} optimizer was used, along with a cosine annealing learning rate scheduler with linear warmup. We used an exponential moving average~(EMA) with decay rate of 0.9 on the ImageFlowNet models.

To accommodate the GPU VRAM limits, we used gradient aggregation to trade efficiency for space while achieving the desired effective batch size --- we used an actual batch size of 1, scaled the loss by $\frac{1}{64}$, and updated the weights every 64 batches.

Training of the segmentation networks are described in the next section (Evaluation Metrics).

\clearpage
\newpage
\section{Evaluation Metrics}
\label{supp:metrics}

The evaluation metrics cover image similarity, residual magnitude, and atrophy similarity.

\paragraph{Image similarity}
We measure the image similarity between the real future image $x_j$ and the predicted future image $\widehat{x_j}$ using peak signal-to-noise ratio~(PSNR) and structural similarity index~(SSIM). These two metrics are widely used in image-to-image tasks such as super-resolution, denoising, inpainting, etc.

PSNR is a normalized version of the mean squared error between two images that takes into account the dynamic range of the image data. The formula is given by Eqn~\eqref{eqn:psnr}.

\begin{align}
    \label{eqn:psnr}
    \mathrm{PSNR}(x_a, x_b) &= 10 \log_{10} \left( \frac{R}{\mathrm{MSE}(x_a, x_b)} \right) \textrm{, where} \\
    \nonumber R &\textrm{ is the common dynamic range of the images}\\
    \nonumber \mathrm{MSE}(x_a, x_b) &= \frac{1}{H \times W} \sum_{h \in H, w \in W} ||x^{(h,w)}_a - x^{(h,w)}_b||^2
\end{align}

SSIM measures the similarity between two images by describing the perceived change in structural information. The formula is given by Eqn~\eqref{eqn:ssim}. We used the implementation in Scikit-image~\cite{scikit-image}.

\begin{align}
    \label{eqn:ssim}\mathrm{SSIM}(x_a, x_b) &= \frac{(2 \mu_{x_a} \mu_{x_b} + c_1)(2 \sigma_{x_a x_b} + c_2)}{(\mu^2_{x_a} + \mu^2_{x_b} + c_1)(\sigma^2_{x_a} + \sigma^2_{x_b} + c_2)} \textrm{, where} \\
    \nonumber \mu_{x_a} &\textrm{ is the pixel sample mean of } x_a \\
    \nonumber \mu_{x_b} &\textrm{ is the pixel sample mean of } x_b \\
    \nonumber \sigma^2_{x_b} &\textrm{ is the variance of } x_b \\
    \nonumber \sigma^2_{x_b} &\textrm{ is the variance of } x_b \\
    \nonumber \sigma_{x_b x_b} &\textrm{ is the covariance of } x_a \textrm{ and } x_b \\
    \nonumber c_1 &= (0.01 R)^2, c_2 = (0.03 R)^2\\
    \nonumber R &\textrm{ is the common dynamic range of the images}
\end{align}

\paragraph{Residual magnitude}
We evaluated the magnitude of the residual maps $\widehat{x_j} - x_j$ using the mean average error (MAE) and the mean squared error (MSE).

\paragraph{Atrophy similarity}
We also want to emphasize the precise representation of the atrophy region. To this end, the simplest metric is the dice similarity coefficient~(DSC) and Hausdorff distance~(HD) of the binarized atrophy regions. DSC and HD between two binary masks $X$ and $Y$ are given by Eqn~\eqref{eqn:dsc} and Eqn~\eqref{eqn:hd}, respectively. For HD, we used the implementation in Scikit-image~\cite{scikit-image}.

\begin{align}
    \label{eqn:dsc}\mathrm{DSC}(X, Y) &= \frac{|X \cap Y|}{|X| + |Y|}
\end{align}

\begin{align}
    \label{eqn:hd}\mathrm{HD}(X, Y) &= \max \left\{ \sup_{x \in X} d(x, Y), \sup_{y \in Y} d(X, y) \right\}
\end{align}

To perform atrophy segmentation, we separately trained three auxiliary image segmentation network on all images, one for each dataset. All retinal images have their atrophy regions labeled by ophthalmologists. All brain images have associated segmentation maps from the dataset providers. These segmentators that we trained have an nn-UNet~\cite{nnUNet} architecture and were trained with an AdamW~\cite{AdamW} optimizer at an initial learning rate of 0.001 for 120 epochs. With these networks, we can segment the atrophy regions in both the real future image $x_j$ and the predicted future image $\widehat{x_j}$. DSC and HD can be computed on the segmentation masks between each pair of interest.

\vspace{12pt}
\end{appendices}

% \clearpage
% \newpage
% \bibliographystyleL{unsrt}
% \bibliographyL{references}

\end{document}